\documentclass[journal, onecolumn, twoside]{IEEEtran}
\normalsize
\ifCLASSINFOpdf
\else
\fi
\usepackage{setspace}
\usepackage{color}
\usepackage{cite}
\usepackage{pifont}
\usepackage{amsmath}
\usepackage{mathtools}
\usepackage{amsfonts}
\usepackage{amssymb}
\usepackage{amsthm}
\usepackage{tikz}
\usepackage{multirow}
\usepackage{makecell}
\usepackage{mathdots}
\usepackage{arydshln}
\usepackage{cases}
\usepackage{bm}
\usepackage{graphicx}
    \graphicspath{{../}}
    \DeclareGraphicsExtensions{.pdf}
\usepackage[caption=true,font=footnotesize]{subfig}
\usepackage{epstopdf}
\usepackage{multirow}
\usepackage{makecell}
\usepackage{mathdots}
\usepackage{booktabs}
\usepackage{url}
\usepackage{bm}
\usepackage{xtab}
\usepackage{pifont}
\usepackage{array}
\usepackage{algorithm}
\usepackage{algorithmic}
\usepackage{enumerate}
\usetikzlibrary{arrows}
\usepackage{caption}

\allowdisplaybreaks[4]
\usepackage[colorlinks,
            linkcolor=blue,
            anchorcolor=blue,
            citecolor=blue]{hyperref}
\theoremstyle{plain}

\newtheorem{theorem}{Theorem}
\newtheorem{lemma}[theorem]{Lemma}

\newtheorem{definition}[theorem]{Definition}
\newtheorem{corollary}[theorem]{Corollary}
\theoremstyle{definition}
\newtheorem{example}{Example}

\newtheorem{remark}{Remark}

\begin{document}
\title{Duality and Reverse Self-Dual Constructions for Hyperderivative Reed-Solomon Codes}

\author{Hongchang Li,~\IEEEmembership{}
        Zhihao Zhu,~\IEEEmembership{}
        Weijun Fang,~\IEEEmembership{}
        Jun Zhang,~\IEEEmembership{}
     and  Sihuang Hu~\IEEEmembership{}
\IEEEcompsocitemizethanks{\IEEEcompsocthanksitem Hongchang Li, Weijun Fang and Sihuang Hu are with the State Key Laboratory of Cryptography and Digital Economy Security, the Key Laboratory
of Cryptologic Technology and Information Security, Ministry of Education,
and the School of Cyber Science and Technology, Shandong University,
Qingdao 266237, China (emails: lihongchang866@163.com, fwj@sdu.edu.cn, husihuang@sdu.edu.cn).}
\IEEEcompsocitemizethanks{\IEEEcompsocthanksitem Zhihao Zhu and Jun Zhang are with the School of Mathematical Sciences, Capital Normal University, Beijing 100048, China. (emails: 2240501027@cnu.edu.cn, junz@cnu.edu.cn).
}
\thanks{The work is supported in part by the National Key Research and
Development Program of China under Grant No. 2021YFA1001000, the National Natural Science Foundation of China under
Grant Nos. 12441105 and 62571301, and the Shandong Provincial Natural Science Foundation
under Grant Nos. ZR2025QA05 and ZR2026ZD36. (Corresponding Author: Weijun Fang)}
}

\maketitle

\begin{abstract}
Hyperderivative Reed-Solomon (HRS) codes form a class of maximum-distance-separable codes under the Niederreiter-Rosenbloom-Tsfasman metric and may be viewed as a derivative-evaluation extension of classical Reed-Solomon codes. For generalized Reed-Solomon codes, the Euclidean dual is again a generalized Reed-Solomon code. In this paper, we investigate the corresponding duality problem for HRS codes. Using a residue-theoretic argument, we derive an explicit component-wise representation for the Euclidean dual of an HRS code. The formula shows that, in general, the Euclidean dual is not an HRS code. Instead, it is blockwise upper-triangularly equivalent to a reverse-order HRS evaluation code, where the reversal occurs in the hyperderivative orders within each evaluation block. In particular, for full-domain HRS codes with low multiplicity, the
triangular transformations reduce to diagonal scalings, and the Euclidean dual is obtained as the row reversal of an HRS code. Based on this reverse-order dual structure, we further study reverse self-dual HRS codes. We establish explicit criteria for reverse self-duality and construct several families from additive and multiplicative coset structures.
\end{abstract}

\begin{IEEEkeywords}
Hyperderivative Reed-Solomon codes, NRT metric, duality, reverse self-duality, residues
\end{IEEEkeywords} \IEEEpeerreviewmaketitle
\section{Introduction}\label{secI}
 Reed--Solomon (RS) codes are among the most fundamental families in modern algebraic coding theory. Classically defined via the evaluation of polynomials over finite fields, these codes are maximum distance separable (MDS), meaning that they reach the Singleton bound in the Hamming metric and provide optimal error-correction capability for a given redundancy. However, with the evolution of increasingly structured data models and channel models, the traditional Hamming metric is often insufficient to capture the underlying dependencies among code coordinates. This has motivated the study of Niederreiter–Rosenbloom–Tsfasman (NRT) metrics, which provide a refined framework for coding over partially ordered sets.

This metric was first introduced by Niederreiter in 1987 in the context of the uniform distribution of point sets in Euclidean space~\cite{N87, N91, N92}, and was independently applied to coding theory by Rosenbloom and Tsfasman~\cite{RT97}. The associated $P$-distance is a metric on $\mathbb{F}_q^n$, and the NRT distance is the corresponding metric on the matrix space $\mathrm{Mat}_{s \times r}(\mathbb{F}_q)$.  Following the formal introduction of the NRT metric, a fundamental problem is to maximize the
minimum distance for a fixed dimension $k$ and length $n$. To address this, Rosenbloom and Tsfasman established the Singleton bound for the NRT metric, defining the MDS poset codes as those that achieve this theoretical limit~\cite{RT97}. Since then, a wide range of coding-theoretic problems have been studied under the NRT metric, including the fundamental characterization of MDS codes~\cite{Dougherty02, Dougherty04, Hyun08}, the establishment of the MacWilliams duality identity~\cite{Dougherty02b, Kim05, Ozen15}, and the exploration of covering properties~\cite{Castoldi15}. As a primary example of MDS codes under the NRT metric, classical NRT Reed-Solomon codes were introduced by Rosenbloom and Tsfasman~\cite{RT97}. These codes are constructed by evaluating polynomials and their successive derivatives, successfully extending the MDS property to poset structures. In this paper, we study the class of Hyperderivative Reed-Solomon (HRS) codes, which serve as a generalization of the classical NRT Reed-Solomon codes.

Informally, an HRS code is obtained by evaluating a polynomial together 
with its first several hyperderivatives, also called Hasse derivatives, at 
a set of prescribed field elements, possibly equipped with coordinate-wise 
multipliers. More precisely, let $\mathbb F_q$ be the finite field with 
$q$ elements, and let $r$, $s$, and $k$ be positive integers satisfying 
$r\le q$ and $k\le rs$. Let  
$\alpha_1,\ldots,\alpha_r\in \mathbb F_q$ be pairwise distinct evaluation 
points, and let $v_{i,j}\in \mathbb F_q^\ast$ be nonzero multipliers. 
Then the codewords of the corresponding HRS code are of the form
\[
\bigl(v_{i,j}\partial^{(i-1)}f(\alpha_j)\bigr)_
{1\le i\le s,\ 1\le j\le r},
\]
where $f$ ranges over polynomials of degree at most $k-1$, and 
$\partial^{(i)}$ denotes the $i$-th hyperderivative. The formal 
definition of hyperderivatives and the precise definition of HRS codes 
are recalled in Section~\ref{SecII}. 
When $s=1$, an HRS code reduces to a  generalized 
Reed-Solomon (GRS) code. In recent years, various structural and algorithmic 
aspects of linear codes under the NRT metric have been investigated~\cite{Ozen06, Panek10, Wang25}. For HRS codes in particular, Gu et al. 
studied the dimensions of their Schur squares~\cite{Gu26a} and developed a Welch-Berlekamp type algorithm for unique decoding~\cite{Gu26b}. 
Nevertheless, some fundamental algebraic properties of HRS codes remain 
less understood, especially their dual structure.

For classical GRS codes, it is well known that the Euclidean dual of a GRS code is again a GRS code. This is one of the basic structural properties of GRS codes and has many important applications, such as explicit descriptions of parity-check matrices, syndrome-based decoding, and coding schemes for distributed storage. It is therefore natural to ask whether an analogous duality property holds for HRS codes. Recently, Can and Horowitz~\cite{CH26} initiated the study of duality for HRS codes and claimed that the Euclidean dual of an HRS code is again an HRS code. However, this claim is not correct as stated.

In this paper, we investigate the duality and reverse self-duality of HRS codes. Our first contribution is an explicit component-wise representation for the Euclidean dual of an HRS code (see Theorem \ref{duality-thm}). This formula shows that the dual has a reverse-order hyperderivative evaluation structure, where the Hasse derivative orders are reversed within each evaluation block. It also gives the precise correction to the closure assertion in~\cite{CH26}. In general, the Euclidean dual of an HRS code is not an HRS code in the strict sense of Definition~\ref{HRS-def}. Rather, it is blockwise upper-triangularly equivalent to a reverse-order HRS evaluation code, with the local triangular transformations determined by the expansions of certain rational functions at the evaluation points (see Remark \ref{rem1}). In an important full-domain low-multiplicity case, these triangular transformations reduce to diagonal scalings, and the row-reversed dual is again an HRS code (see Corollary \ref{Full-Evaluation-Points}).

Motivated by this dual structure, we then study reverse self-dual HRS codes. We first establish a moment criterion for reverse self-duality in
the multiplicity-two case (see Theorem \ref{thm:criterion}) and construct several families from additive
and multiplicative cosets (see Theorems \ref{thm:additive-single}, \ref{thm:mult-single} and \ref{thm:mult-s=2}). We further extend the criterion to general
multiplicity under a natural symmetry condition on the multipliers (see Theorem \ref{thm:general_criterion}), and
derive corresponding constructions using additive and multiplicative
structures of finite fields (see Theorems \ref{thm:additive-p-power}, \ref{thm:affine-additive-construction} and \ref{mul--pullback-general}).

The remainder of this paper is organized as follows. Section~\ref{SecII}   recalls
the necessary preliminaries on the NRT metric, Hasse derivatives, HRS
codes, and residues. Section~\ref{secIII} derives the explicit Euclidean dual form
of HRS codes and explains its blockwise triangular reverse-order
structure. Section~\ref{secIV} studies reverse self-dual HRS codes, including the
multiplicity-two case and its extension to general multiplicity.
Section~\ref{secV} concludes the paper.

\section{Preliminaries }\label{SecII}
In this section, we recall several basic notions that will be used 
throughout the paper, including the NRT metric, Hasse derivatives, 
Hyperderivative Reed-Solomon codes and the residue theorem.

\subsection{NRT Metric}

Let $P = ([n], \le)$ be a partially ordered set (poset) on the set $[n]:= \{1, \dots, n\}$. For any $v \in \mathbb{F}_q^n$, let ${\rm supp}(v) = \{i : v_i \neq 0\}$. The $P$-weight of $v$ is defined as the cardinality of the lower-order ideal generated by ${\rm supp}(v)$:
\[  \omega_P(v) = |\{j \in [n] : \exists~i \in {\rm supp}(v) \text{ such that } j \le i\}|.
\]
  For $v, w \in \mathbb{F}_q^n$, the $P$-metric of $v$ and $w$ is defined by 
\[
d_P(v, w) =\omega_P(v - w).
\]
The NRT metric (Niederreiter-Rosenbloom-Tsfasman) is a specific poset metric where the poset $C(s,r)$ is a disjoint union of $r$ chains, each of length $s$.  Let $ V := (a_{ij})_{\substack{i=1,\dots,s \\ j=1,\dots,r}} \in \operatorname{Mat}_{s \times r}(\mathbb{F}_q) $.
Let $ V_1, \dots, V_r $ denote the columns of $ V $. The NRT weight of the column $ V_j $, denoted $ \omega_{C(s,1)}(V_j) $, is defined as $$\omega_{C(s,1)}(V_j)=\begin{cases}
    s-\iota(V_{j})+1 & \text{if}\quad V_{j} \neq \mathbf{0}; \\
    0  & \text{if}  \quad V_{j}=\mathbf{0}.
\end{cases}$$
where $\iota(V_{j}) := \min\{i\in[s]\mid a_{ij}\neq 0\}.$ For a matrix $V \in \operatorname{Mat}_{s \times r}(\mathbb{F}_q)$, the NRT weight is given by
\[  \omega_{\mathrm{NRT}}(V) = \sum_{j=1}^r \omega_{C(s,1)}(V_j).\]
The resulting metric will be denoted by $ d_{C(s,r)} $.

\begin{example}
	To illustrate the NRT weight calculation and its connection to the chain poset, let $s$ be a fixed positive integer. Consider a vector $\mathbf{a} = (a_1, a_2, \dots, a_s) \in \mathbb{F}_q^s$. We represent $\mathbf{a}$ by reversing the order of the entries and writing the result as a column vector:
	\[
	\mathbf{a}^\top_{rev} = [a_s, a_{s-1}, \dots, a_1]^\top.
	\]
	The structure of the NRT poset $\mathcal{C}(4, 1)$ for the case $s=4$, which forms a single chain, is presented in Figure \ref{fig:hasse_c41}.
	
	\begin{figure}[htbp]
		\centering
		\begin{tikzpicture}[node distance=1.2cm, every node/.style={circle, draw, inner sep=2pt, font=\small}]
			% Nodes
			\node (n4) {4};
			\node (n3) [below of=n4] {3};
			\node (n2) [below of=n3] {2};
			\node (n1) [below of=n2] {1};
			
			% Edges
			\draw (n4) -- (n3);
			\draw (n3) -- (n2);
			\draw (n2) -- (n1);
		\end{tikzpicture}
		\caption{The Hasse diagram of $\mathcal{C}(4,1)$}
		\label{fig:hasse_c41}
	\end{figure}
	
	More specifically, let $s=4$ and consider a vector $\mathbf{v} = (0, v_2, v_3, 0) \in \mathbb{F}_q^4$ where $v_2, v_3 \neq 0$. Its representation as a column matrix is the following
	\[
	V = \begin{bmatrix} 0 \\ v_3 \\ v_2 \\ 0 \end{bmatrix} \begin{array}{l} \leftarrow \text{poset element } 4 \\ \leftarrow \text{poset element } 3 \\ \leftarrow \text{poset element } 2 \\ \leftarrow \text{poset element } 1 \end{array}
	\]
	The NRT weight $\omega_{\mathcal{C}(4,1)}(V)$ is determined by the highest index $i$ such that the $i$-th component is non-zero. In this case, the non-zero entries are at positions 2 and 3. The maximal non-zero element in the poset is $3$. Therefore, 
	\[
	\omega_{\mathcal{C}(4,1)}(V) = 3.
	\]
	This confirms the formula $s - i + 1 = 4 - 2 + 1 = 3$, where $i$ is the index of the first non-zero row from the top.
\end{example}

\begin{theorem}\cite{Hyun08}
	Let $\mathcal{C} \subseteq \operatorname{Mat}_{s \times r}(\mathbb{F}_q)$ be an NRT code. The minimum NRT distance of $\mathcal{C}$, denoted by $d_{\mathrm{NRT}}(\mathcal{C})$, is defined as the minimum nonzero NRT weight $w_{\mathrm{NRT}}(V - W)$ for any distinct $V, W \in \mathcal{C}$. Then, the cardinality of $\mathcal{C}$ satisfies the following
	\begin{equation*}
		|\mathcal{C}| \leq q^{rs - d_{\mathrm{NRT}}(\mathcal{C}) + 1}.
	\end{equation*}
	In particular, if $\mathcal{C}$ is a linear $[rs, k]$ NRT code, the minimum distance satisfies the following
	\begin{equation*}
		d_{\mathrm{NRT}}(\mathcal{C}) \leq rs - k + 1.
	\end{equation*}
\end{theorem}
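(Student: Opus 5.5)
The plan is a puncturing (projection) argument, the standard route to Singleton-type bounds, adapted to the chain structure of $C(s,r)$. Write $d=d_{\mathrm{NRT}}(\mathcal{C})$ and assume $d\ge 1$; if $|\mathcal{C}|\le 1$ there is nothing to prove. The idea is to choose a set $I$ of $d-1$ coordinate positions that forms an \emph{order ideal} of the poset $C(s,r)$, and to delete exactly those positions. In the matrix picture, the poset elements lying low in a chain are the bottom entries of a column. An order ideal therefore consists of the bottom $t_j$ entries of each column $j$, for some numbers $0\le t_j\le s$ with $\sum_j t_j=d-1$. Such a choice exists because $d-1\le rs-1<rs$: fill the columns from the bottom up, one column at a time, until $d-1$ positions have been taken.

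Let $\pi:\operatorname{Mat}_{s\times r}(\mathbb{F}_q)\to\mathbb{F}_q^{\,rs-(d-1)}$ be the map that deletes the entries in $I$. The key step is to show that $\pi$ is injective on $\mathcal{C}$. Suppose $V\neq W$ in $\mathcal{C}$ with $\pi(V)=\pi(W)$. Then $\operatorname{supp}(V-W)\subseteq I$. Since $I$ is an order ideal, the ideal generated by $\operatorname{supp}(V-W)$ is still contained in $I$. Hence $\omega_{\mathrm{NRT}}(V-W)\le |I|=d-1<d$, which contradicts the definition of $d$.

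It remains to check this weight claim against the explicit column formula $\omega_{C(s,1)}(V_j)=s-\iota(V_j)+1$. If column $j$ of $V-W$ is supported only on its bottom $t_j$ entries, that is, on rows $s-t_j+1,\dots,s$, then $\iota\ge s-t_j+1$. This gives a column weight of at most $t_j$, and summing over $j$ gives weight at most $d-1$. Injectivity then yields $|\mathcal{C}|\le q^{rs-d+1}$.

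For the linear statement, take $\mathcal{C}$ to be a linear $[rs,k]$ code, so $|\mathcal{C}|=q^k$. The bound gives $k\le rs-d+1$, which rearranges to $d\le rs-k+1$.

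The only step needing care is the orientation convention: which end of each column is "low" in the chain. The weight formula uses the first nonzero row from the top, so the lowest poset elements correspond to the bottom rows, and the deleted entries must be taken from the bottom of each column. With that settled, the argument is routine.
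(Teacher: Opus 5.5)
Your proof is correct: deleting the $d-1$ positions of an order ideal gives a map that is injective on $\mathcal{C}$, hence $|\mathcal{C}|\le q^{rs-d+1}$, and the linear case follows from $|\mathcal{C}|=q^k$. Taking that ideal from the bottom rows of the columns is the right orientation, given $\omega_{C(s,1)}(V_j)=s-\iota(V_j)+1$. The paper does not prove this statement itself but quotes it from \cite{Hyun08}; your ideal-deletion argument is the standard proof of the poset Singleton bound, so there is no alternative route in the paper to compare against.
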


\begin{definition}
	A code $\mathcal{C} \subseteq \operatorname{Mat}_{s \times r}(\mathbb{F}_q)$ that employs the NRT-metric is referred to as an NRT code. A linear NRT code with parameters $[rs, k, d_ {\mathrm{NRT}}]$ is called an MDS NRT-metric code if it achieves the equality $d_{\mathrm{NRT}} = rs - k + 1$.
\end{definition}

\subsection{Hyperderivatives and HRS Codes}

We first recall the notion of hyperderivatives, also called 
Hasse derivatives, and then give the formal definition of 
Hyperderivative Reed-Solomon codes.

\begin{definition}
Let $f(x)=\sum_{\ell=0}^{n} f_\ell x^\ell \in \mathbb{F}_q[x]$.
For a nonnegative integer $j$, the $j$-th hyperderivative, or Hasse
derivative, of $f$ is defined by
\[
\partial^{(j)}f(x)
=
\sum_{\ell=0}^{n}\binom{\ell}{j}f_\ell x^{\ell-j},
\]
where \[\binom{\ell}{j}:=\left\{\begin{array}{ll}\frac{\ell!}{(\ell-j)!j!},& \text { if } 0 \leq j \leq \ell,\\0, & \text { otherwise. }\end{array}\right.\]
Here the coefficient $\binom{\ell}{j}$ is understood as the reduction modulo
$p$ of the corresponding integer binomial coefficient. In particular, $\partial^{(0)}f=f$.
\end{definition}

Equivalently, Hasse derivatives are characterized by the formal Taylor
expansion
\[
f(x+z)=\sum_{j\ge 0}\partial^{(j)}f(x)z^j.
\]
Thus $\partial^{(j)}f(x)$ is the coefficient of $z^j$ in $f(x+z)$.
For any $u\in \mathbb{F}_q$, one also has
\[
f(x)=\sum_{j\ge 0}\partial^{(j)}f(u)(x-u)^j.
\]
We use the same notation for the local expansion of a rational function that is regular at the expansion point.

The following lemma presents the product rule for Hasse derivatives.

\begin{lemma}[\cite{Dvir13}]\label{product-rule}
Let $\mathbb{F}$ be a field and let $g_1,\ldots,g_\ell\in \mathbb{F}[x]$. Then
\[
\partial^{(n)}(g_1\cdots g_\ell)
=
\sum_{\substack{n_1,\ldots,n_\ell\ge 0\\
n_1+\cdots+n_\ell=n}}
\bigl(\partial^{(n_1)}g_1\bigr)\cdots
\bigl(\partial^{(n_\ell)}g_\ell\bigr).
\]
\end{lemma}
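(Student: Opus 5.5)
The cleanest route is through the Taylor-expansion characterization recalled just before the lemma: $\partial^{(j)}f(x)$ is the coefficient of $z^j$ in $f(x+z)$. The identity to prove is then a statement about coefficients of a product in $\mathbb{F}[x][z]$, and no factorials are needed. This matters because we are in characteristic $p$, where the classical Leibniz formula with binomial coefficients is unusable.

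First, we verify that the Taylor characterization holds for every polynomial over an arbitrary field $\mathbb{F}$. For a monomial $x^\ell$, the binomial theorem in $\mathbb{Z}[x,z]$ gives $(x+z)^\ell=\sum_{j}\binom{\ell}{j}x^{\ell-j}z^j$. Reducing the integer coefficients into $\mathbb{F}$, we see that the coefficient of $z^j$ is $\binom{\ell}{j}x^{\ell-j}=\partial^{(j)}x^\ell$. Both sides of the identity are $\mathbb{F}$-linear in $f$, so for any $f\in\mathbb{F}[x]$ we obtain
\[
f(x+z)=\sum_{j\ge0}\partial^{(j)}f(x)\,z^j .
\]

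Next, we use that substitution $x\mapsto x+z$ is a ring homomorphism $\mathbb{F}[x]\to\mathbb{F}[x][z]$. It follows that
\[
(g_1\cdots g_\ell)(x+z)=\prod_{t=1}^{\ell} g_t(x+z)=\prod_{t=1}^{\ell}\Bigl(\sum_{n_t\ge0}\partial^{(n_t)}g_t(x)\,z^{n_t}\Bigr).
\]
We expand the right-hand product, which is a finite computation since each factor is a polynomial in $z$, and collect the coefficient of $z^n$. This coefficient is exactly $\sum_{n_1+\cdots+n_\ell=n}\prod_t\partial^{(n_t)}g_t(x)$. By the first step applied to $f=g_1\cdots g_\ell$, the same coefficient equals $\partial^{(n)}(g_1\cdots g_\ell)(x)$, and the lemma follows.

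A more elementary alternative is to prove the case $\ell=2$ and then induct on $\ell$. For $\ell=2$, it suffices by bilinearity to treat monomials $x^a$ and $x^b$. There the claim reduces to the Vandermonde identity $\binom{a+b}{n}=\sum_{i+j=n}\binom{a}{i}\binom{b}{j}$ over $\mathbb{Z}$, which then descends to $\mathbb{F}$. The only delicate point in either argument is the first step: we must make sure the binomial coefficients are taken in $\mathbb{Z}$ and only then mapped into $\mathbb{F}$. This is exactly the convention fixed in the definition, so the main obstacle is purely bookkeeping.
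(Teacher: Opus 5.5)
Your proof is correct. The paper gives no argument of its own for this lemma (it simply cites \cite{Dvir13}), and your route is the standard proof behind that citation: you check the Taylor characterization $f(x+z)=\sum_{j}\partial^{(j)}f(x)z^j$ on monomials via the integer binomial theorem, then read off the coefficient of $z^n$ in $(g_1\cdots g_\ell)(x+z)=\prod_{t}g_t(x+z)$. This uses exactly the characterization the paper recalls just before the lemma, and your Vandermonde-plus-induction alternative is also valid.
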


For $c\in \mathbb{F}$ and $t\ge 0$, we have
\[\partial^{(n)}(x-c)^{t}=\left\{\begin{array}{ll}\binom{t}{n}(x-c)^{t-n}, & 0 \leq n \leq t, \\0, & n>t .\end{array}\right.\]
In particular,
\[
\partial^{(t)}(x-c)^t=1.
\]
Hasse derivatives record multiplicities in the following sense. A point
$\alpha$ is a root of $f$ of multiplicity $m$ if and only if
\[
\partial^{(j)}f(\alpha)=0,\quad 0\le j\le m-1, \text{ and }
\partial^{(m)}f(\alpha)\ne 0.
\]
We denote by $\nu_f(\alpha)$ the order of vanishing of $f$ at
$\alpha$.

We now define Hyperderivative Reed--Solomon codes. Let $\mathcal{A}=\{\alpha_1,\ldots,\alpha_r \}$ be an evaluation subset of
$r$ pairwise distinct elements of $\mathbb F_q$, and let
\[
 V=(v_{i,j})_{1\le i\le s,\ 1\le j\le r}\in(\mathbb F_q^\ast)^{s\times r}
\]
be a multiplier matrix. The integer $s$ will be referred to as the
multiplicity parameter, since at each evaluation point $\alpha_j$ the          
code evaluates the polynomial together with its first $s-1$ Hasse
derivatives. For an integer $h\ge 0$, define
\[
\mathcal P_{<h}=\{f\in\mathbb F_q[x]: \deg f<h\},
\]
with the convention that $\mathcal P_{<0}=\{0\}.$
\begin{definition}[Hyperderivative Reed--Solomon code]\label{HRS-def}
For $0 \le k \le rs$, the Hyperderivative Reed--Solomon code with dimension $k$,
evaluation set $\mathcal{A}$, multiplier matrix $V$, multiplicity $s$,
and length  $sr$, denoted by
\[
\operatorname{HRS}_k(\mathcal{A},V,s,r),
\]
is the image of the evaluation map
\[
\operatorname{Ev}_{\mathcal{A},V}:\mathcal{P}_{<k}
\longrightarrow
\operatorname{Mat}_{s\times r}(\mathbb{F}_q)
\]
defined by
\[
f\longmapsto
\begin{bmatrix}
v_{1,1}\partial^{(0)}f(\alpha_1) &
v_{1,2}\partial^{(0)}f(\alpha_2) &
\cdots &
v_{1,r}\partial^{(0)}f(\alpha_r)
\\
v_{2,1}\partial^{(1)}f(\alpha_1) &
v_{2,2}\partial^{(1)}f(\alpha_2) &
\cdots &
v_{2,r}\partial^{(1)}f(\alpha_r)
\\
\vdots & \vdots &  & \vdots
\\
v_{s,1}\partial^{(s-1)}f(\alpha_1) &
v_{s,2}\partial^{(s-1)}f(\alpha_2) &
\cdots &
v_{s,r}\partial^{(s-1)}f(\alpha_r)
\end{bmatrix}.
\]
Equivalently,
\[
\operatorname{HRS}_k(\mathcal{A},V,s,r)
=
\left\{
\begin{bmatrix}
v_{1,1}\partial^{(0)}f(\alpha_1) &
v_{1,2}\partial^{(0)}f(\alpha_2) &
\cdots &
v_{1,r}\partial^{(0)}f(\alpha_r)
\\
v_{2,1}\partial^{(1)}f(\alpha_1) &
v_{2,2}\partial^{(1)}f(\alpha_2) &
\cdots &
v_{2,r}\partial^{(1)}f(\alpha_r)
\\
\vdots & \vdots & & \vdots
\\
v_{s,1}\partial^{(s-1)}f(\alpha_1) &
v_{s,2}\partial^{(s-1)}f(\alpha_2) &
\cdots &
v_{s,r}\partial^{(s-1)}f(\alpha_r)
\end{bmatrix}
:
f\in\mathcal{P}_{<k}
\right\}.
\]
\end{definition}
When $s=1$, the above definition reduces to the classical generalized
Reed-Solomon code
\[\operatorname{HRS}_k(\mathcal{A},V,1,r)
=
\left\{
\bigl(v_{1,j}f(\alpha_j)\bigr)_{1\le j\le r}
:
f\in\mathcal{P}_{<k}
\right\}.\]
Thus, HRS codes may be viewed as derivative-evaluation generalizations of
GRS codes. When $V$ is the all-ones matrix, HRS codes reduce to the known NRT RS codes \cite{RT97, Skriganov01}.

The MDS property of HRS codes with respect to the
NRT metric is entirely analogous to that of NRT RS codes, as established in \cite{Skriganov01}. For completeness, we include the proof.

\begin{theorem}
For $1\le k\le rs$, $\operatorname{HRS}_k(\mathcal{A},V,s,r)$
is an $[rs,k]$ MDS code with respect to the NRT metric, equivalently,
\[
d_{\mathrm{NRT}}
\bigl(\operatorname{HRS}_k(\mathcal{A},V,s,r)\bigr)
=
rs-k+1.
\]
\end{theorem}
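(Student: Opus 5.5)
We first show that the evaluation map $\operatorname{Ev}_{\mathcal{A},V}$ is injective on $\mathcal{P}_{<k}$ for $k\le rs$, so that the code has dimension exactly $k$. This is a by-product of the weight bound below: a nonzero codeword has positive weight, so no nonzero $f$ maps to the zero matrix. Since the code is linear, its minimum NRT distance equals its minimum nonzero NRT weight. By the Singleton bound of \cite{Hyun08} recalled above, $d_{\mathrm{NRT}}\le rs-k+1$. It therefore suffices to show that every nonzero codeword $\operatorname{Ev}_{\mathcal{A},V}(f)$ with $0\ne f\in\mathcal{P}_{<k}$ has NRT weight at least $rs-k+1$.

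Fix such an $f$ and look at column $j$. The multipliers $v_{i,j}$ are nonzero, so the entry in row $i$ vanishes exactly when $\partial^{(i-1)}f(\alpha_j)=0$. By the multiplicity characterization recalled above, the first nonzero row index in column $j$ is $\iota_j=\nu_f(\alpha_j)+1$ when $\nu_f(\alpha_j)\le s-1$, and the column is zero when $\nu_f(\alpha_j)\ge s$. Set $m_j=\min\{\nu_f(\alpha_j),s\}$. In both cases the weight of column $j$ is $\omega_{C(s,1)}(V_j)=s-m_j$.

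Summing over the columns gives $\omega_{\mathrm{NRT}}=rs-\sum_j m_j$. Since $(x-\alpha_j)^{\nu_f(\alpha_j)}$ divides $f$ for each $j$, and the $\alpha_j$ are pairwise distinct, the product $\prod_j(x-\alpha_j)^{m_j}$ divides the nonzero polynomial $f$. Hence $\sum_j m_j\le\deg f\le k-1$, and so $\omega_{\mathrm{NRT}}\ge rs-k+1$. Combined with the Singleton bound, this gives equality.

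The only delicate step is identifying the column weight with the vanishing order. This needs the equivalence that $\partial^{(j)}f(\alpha)=0$ for $j<m$ exactly when $(x-\alpha)^m\mid f$. That equivalence follows directly from the Taylor expansion $f(x)=\sum_j\partial^{(j)}f(\alpha)(x-\alpha)^j$, which holds in any characteristic. It is precisely why Hasse derivatives are used here rather than ordinary ones. Everything else is routine counting.
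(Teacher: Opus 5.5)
Your proof is correct and follows essentially the same route as the paper: you compute each column weight as $s-\min\{\nu_f(\alpha_j),s\}$, bound $\sum_j\min\{\nu_f(\alpha_j),s\}\le\deg f\le k-1$, and close with the NRT Singleton bound. The only cosmetic difference is that you get injectivity of $\operatorname{Ev}_{\mathcal{A},V}$ from the weight bound itself (since $rs-k+1\ge 1$), whereas the paper proves it first by a separate root-counting argument.
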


\begin{proof}
The evaluation map is injective. Indeed, if a polynomial
$f\in\mathcal{P}_{<k}$ is mapped to the zero codeword, then
\[
\partial^{(i)}f(\alpha_j)=0,
\qquad
0\le i\le s-1,\ 1\le j\le r.
\]
Hence, each $\alpha_j$ is a root of $f$ of multiplicity at least $s$.
Thus, $f$ has at least $rs$ roots counted with multiplicity. Since
$\deg f<k\le rs$, we must have $f\equiv 0$. Therefore, the dimension of the
code is $k$.

Let $f \in \mathcal{P}_{<k}$ be a nonzero polynomial, and set $\nu_j:=\nu_f(\alpha_j)$ for $1\le j\le r.$
In the $j$-th column of the corresponding codeword, the first nonzero
entry occurs in the row $\nu_j+1$ if $\nu_j<s$, while the entire column is zero if $\nu_j\ge s$. Hence, the NRT weight of this column is $s-\min\{\nu_j,s\}.$
Therefore
\[
\omega_{\mathrm{NRT}}\bigl(\operatorname{Ev}_{\mathcal{A},V}(f)\bigr)
=
\sum_{j=1}^{r}\bigl(s-\min\{\nu_j,s\}\bigr)
=
rs-\sum_{j=1}^{r}\min\{\nu_j,s\}.
\]
Since $\sum_{j=1}^{r}\min\{\nu_j,s\}
\le
\sum_{j=1}^{r}\nu_j
\le
\deg f
\le
k-1$,
we obtain $\omega_{\mathrm{NRT}}\bigl(\operatorname{Ev}_{\mathcal{A},V}(f)\bigr)
\ge
rs-k+1$. Thus $d_{\mathrm{NRT}}\ge rs-k+1$.

On the other hand, by the Singleton bound (see Corollary 2.2 in \cite{Hyun08}), $d_{\mathrm{NRT}}\leq  rs-k+1$.
Hence $d_{\rm NRT}\bigl(\operatorname{HRS}_k(\mathcal{A},V,s,r)\bigr)=rs-k+1$ and  the HRS code is MDS.
\end{proof}

\subsection{Residues}
The explicit duality formula for HRS codes is derived through a
residue-theoretic argument on the projective line. We first recall the residue facts needed for this argument in a form adapted to our finite-field setting. 

Throughout this subsection, for a rational function $R(x)\in \mathbb F(x)$
and a point $P\in \mathbb P^1$, the notation
$\operatorname{Res}_{P}R(x)$ means the residue of the rational differential
$R(x)\,dx$ at $P$. In particular, if $a\in \mathbb F$ and
\[
R(x)=\sum_{\ell\ge \ell_0} c_\ell (x-a)^\ell
\]
is the Laurent expansion of $R(x)$ at $a$, then
\[
\operatorname{Res}_{x=a}R(x):=c_{-1}.
\]
This agrees with the usual residue of $R(x)\,dx$ at the rational point
$x=a$. At the point at infinity, the residue is computed by using the local
parameter $t=1/x$.

We shall repeatedly use the following elementary consequence of the residue
theorem on $\mathbb P^1$.

\begin{theorem}[Residue theorem on $\mathbb P^1$ {\cite{Stichtenoth09}}]\label{resthm}
Let $\mathbb F$ be a finite field, and let $R(x)\in \mathbb F(x)$.
Assume that all finite poles of $R(x)$ are $\mathbb F$-rational points
$\alpha_1,\ldots,\alpha_m$. Then
\[
\sum_{\ell=1}^{m}\operatorname{Res}_{x=\alpha_\ell} R(x)
+
\operatorname{Res}_{x=\infty} R(x)
=0.
\]
In particular, if $R(x)=A(x)/B(x)$, where $A(x),B(x)\in\mathbb F[x]$ and
\[
\deg A\le \deg B-2,
\]
then $\operatorname{Res}_{x=\infty}R(x)=0$, and hence
\[
\sum_{\ell=1}^{m}\operatorname{Res}_{x=\alpha_\ell} R(x)=0 .
\]
\end{theorem}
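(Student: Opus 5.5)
The statement is the residue theorem on $\mathbb P^1$ for a rational function whose finite poles are all $\mathbb F$-rational, together with the vanishing of the residue at infinity when $\deg A\le \deg B-2$. The plan is to reduce it to partial fractions, which makes the argument self-contained rather than an appeal to the general residue theorem for function fields.

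\emph{Step 1: partial fractions.} Write $R=A/B$ in lowest terms. Since every finite pole is some $\alpha_\ell\in\mathbb F$, we have $B=c\prod_{\ell}(x-\alpha_\ell)^{e_\ell}$. Partial fraction decomposition over $\mathbb F$ then gives
\[
R(x)=P(x)+\sum_{\ell=1}^{m}\sum_{t=1}^{e_\ell}\frac{c_{\ell,t}}{(x-\alpha_\ell)^t},
\]
with $P\in\mathbb F[x]$. Both sides of the claimed identity are $\mathbb F$-linear in $R$, so it suffices to check each summand separately.

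\emph{Step 2: the polynomial part.} A polynomial has zero residue at every finite point. At infinity, substitute $x=1/u$, so $dx=-u^{-2}du$. Then $x^n\,dx=-u^{-n-2}\,du$, whose coefficient of $u^{-1}$ is zero for every $n\ge 0$. Hence $P$ contributes nothing on either side.

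\emph{Step 3: the polar terms.} Take $g=(x-\alpha)^{-t}$ with $t\ge1$.
\begin{itemize}
\item At finite points, its residue is $1$ at $\alpha$ when $t=1$, and $0$ otherwise. The residue is also $0$ at every other finite point, because $g$ is regular there.
\item At infinity, with $x=1/u$ we get
\[
g\,dx=-\frac{u^{t-2}}{(1-\alpha u)^t}\,du .
\]
The expansion $(1-\alpha u)^{-t}=1+O(u)$ is a power series. So the coefficient of $u^{-1}$ is $-1$ when $t=1$ and $0$ when $t\ge2$.
\end{itemize}
In every case the residues of $g$ sum to zero. Summing over all terms gives the first claim.

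\emph{Step 4: the degree condition.} Suppose $\deg A\le\deg B-2$. Then $R=u^{2}h(u)$ with $h$ regular at $u=0$, so $R\,dx=-h(u)\,du$ has no $u^{-1}$ term. Thus $\operatorname{Res}_{x=\infty}R=0$, and the finite residues sum to zero.

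The only delicate point is the residue at infinity, specifically the sign and the factor $-u^{-2}$ coming from $dx$. Once the local parameter is fixed as $u=1/x$, it is a direct check, so I expect no real obstacle beyond keeping that convention consistent.
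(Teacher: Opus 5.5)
Your argument is correct, but it takes a different route from the paper. The paper gives no proof and simply cites Stichtenoth. There the statement is the special case $F=\mathbb F(x)$ of the general residue theorem: the residues of a differential $R\,dx$, summed over all places of a function field, vanish. This is combined with the observation that $R\,dx$ is regular at every place other than the poles of $R$ and $\infty$, because $dx$ has no zeros or poles at finite places.

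You instead verify the identity directly on the spanning set $\{x^n\}_{n\ge 0}\cup\{(x-\alpha)^{-t}\}_{t\ge 1}$ and conclude by linearity. The rationality hypothesis is used exactly once, in Step 1, to make the reduced denominator split into linear factors over $\mathbb F$, so that these terms suffice. Your Step 4 is the same local computation at $u=1/x$ that the cited version also needs for the ``in particular'' part.

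What your route buys is self-containedness and slightly wider validity. It never uses that $\mathbb F$ is finite, and it never has to discuss places of higher degree or the regularity of $dx$, since the identity is checked only at the listed points and at $\infty$. What the citation buys is scope: arbitrary curves and poles at non-rational places. The paper does not need this, since all its applications have denominators $\prod_\ell (x-\alpha_\ell)^s$ with $\alpha_\ell\in\mathbb F_q$, which is exactly the split situation you handle.

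One point worth making explicit: in those applications some listed $\alpha_\ell$ need not be actual poles, for instance when $fg$ vanishes to order at least $s$ there. Your proof covers this automatically by allowing $e_\ell=0$, since the residue at a regular point is zero.
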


We also record the local computation that will be used later. If
$\phi(x)\in \mathbb F(x)$ is regular at $a\in\mathbb F$, then it has the
Hasse--Taylor expansion
\[
\phi(x)=\sum_{\ell\ge0}\partial^{(\ell)}\phi(a)(x-a)^\ell .
\]
Therefore, for every positive integer $s$,
\[
\operatorname{Res}_{x=a}\frac{\phi(x)}{(x-a)^s}
=
\partial^{(s-1)}\phi(a).
\]

In Section~\ref{secIII}, we apply the theorem to rational functions of the form
\[
R(x)=
\frac{f(x)g(x)}
{\prod_{\ell=1}^{r}(x-\alpha_\ell)^s},
\qquad \alpha_1,\ldots,\alpha_r\in\mathbb F_q .
\]
All finite poles are then $\mathbb F_q$-rational, and the local formula
above expresses each finite residue in terms of Hasse derivatives at the
corresponding evaluation point.
    \section{The dual codes of HRS codes}\label{secIII}
In this section, we determine the Euclidean dual of an HRS code. For any $A=(a_{i,j}),\ B=(b_{i,j})\in \operatorname{Mat}_{s\times r}(\mathbb{F}_q)$, we define their Euclidean inner product as
\[
\langle A, B\rangle
=
\sum_{i=1}^{s}\sum_{j=1}^{r}a_{i,j}b_{i,j}.
\]
For a linear code
$C\subseteq \operatorname{Mat}_{s\times r}(\mathbb{F}_q)$, its Euclidean dual is given as
\[
C^\perp
=
\left\{
B\in \operatorname{Mat}_{s\times r}(\mathbb{F}_q):
\langle A,B\rangle=0\ \text{for all } A\in C
\right\}.
\]
Let $\mathcal{A}=\{\alpha_1,\ldots,\alpha_r\}$
be a subset of $r$ distinct elements of $\mathbb{F}_q$, and
$V=(v_{i,j})\in(\mathbb F_q^\ast)^{s\times r}$.
Define
\[
A(x):=\prod_{\ell=1}^{r}(x-\alpha_\ell)^s.
\]
For each $j\in\{1,\ldots,r\}$, put
\[
A_j(x):=\prod_{\ell\ne j}(x-\alpha_\ell)^s.
\]
Then
\[
A(x)=(x-\alpha_j)^s A_j(x),
\qquad
A_j(\alpha_j)\ne 0.
\]

The following theorem provides an explicit description of the Euclidean dual
of an HRS code.
\begin{theorem}\label{duality-thm}
Let
$C=\operatorname{HRS}_k(\mathcal{A},V,s,r)$ be an HRS code. Set $N=rs$. For every polynomial $g\in \mathcal{P}_{<N-k}$, define a matrix
$W(g)=(w_{i,j}(g))\in \operatorname{Mat}_{s\times r}(\mathbb{F}_q)$ by
\[
w_{i,j}(g)
=
\frac{1}{v_{i,j}}
\partial^{(s-i)}
\left(
\frac{g(x)}{A_j(x)}
\right)\bigg|_{x=\alpha_j},
\qquad
1\le i\le s,\ 1\le j\le r.
\]
Then the dual code is given by
\[
C^\perp
=
\{W(g): g\in \mathcal{P}_{<N-k}\}.
\]
\end{theorem}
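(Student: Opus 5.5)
The plan is to prove the inclusion $\{W(g): g\in\mathcal P_{<N-k}\}\subseteq C^\perp$ by a residue computation, and then to conclude equality by comparing dimensions.

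\emph{Orthogonality.} Fix $f\in\mathcal P_{<k}$ and $g\in\mathcal P_{<N-k}$, and consider
\[
R(x)=\frac{f(x)g(x)}{A(x)}.
\]
Since $\deg(fg)\le N-2=\deg A-2$, Theorem~\ref{resthm} gives $\sum_{j=1}^r\operatorname{Res}_{x=\alpha_j}R(x)=0$. Near $\alpha_j$ we write
\[
R(x)=\frac{\phi_j(x)}{(x-\alpha_j)^s},\qquad \phi_j=f\cdot\frac{g}{A_j},
\]
where $\phi_j$ is regular at $\alpha_j$ because $A_j(\alpha_j)\neq0$. The local residue formula recorded after Theorem~\ref{resthm} gives $\operatorname{Res}_{x=\alpha_j}R=\partial^{(s-1)}\phi_j(\alpha_j)$. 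We then expand this using the product rule (Lemma~\ref{product-rule}), applied to Hasse--Taylor expansions of functions regular at $\alpha_j$. It is easiest to justify this directly: the coefficient of $(x-\alpha_j)^{s-1}$ in a product of two power series is the Cauchy product coefficient. This yields
\[
\operatorname{Res}_{x=\alpha_j}R=\sum_{i=1}^{s}\partial^{(i-1)}f(\alpha_j)\,\partial^{(s-i)}\!\left(\frac{g}{A_j}\right)\!(\alpha_j)=\sum_{i=1}^s \bigl(v_{i,j}\partial^{(i-1)}f(\alpha_j)\bigr)\,w_{i,j}(g).
\]
Summing over $j$ gives $\langle \operatorname{Ev}_{\mathcal A,V}(f),W(g)\rangle=0$, so $W(g)\in C^\perp$.

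\emph{Dimension count.} The map $g\mapsto W(g)$ is linear on $\mathcal P_{<N-k}$, which has dimension $N-k$. The code $C$ has dimension $k$ by the MDS theorem, so $\dim C^\perp=N-k$. It therefore suffices to show that $g\mapsto W(g)$ is injective.

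\emph{Injectivity, the main step.} Suppose $W(g)=0$. Since the $v_{i,j}$ are nonzero, $\partial^{(m)}(g/A_j)(\alpha_j)=0$ for $0\le m\le s-1$ and every $j$. Hence $g/A_j$ vanishes to order at least $s$ at $\alpha_j$. Because $A_j$ is a unit in the local ring at $\alpha_j$, this forces $(x-\alpha_j)^s\mid g$ for every $j$, and so $A\mid g$. But $\deg g<N-k\le N=\deg A$, so $g=0$. The one point needing care is that vanishing of the first $s$ Hasse--Taylor coefficients of the rational function $g/A_j$ transfers to $g=(g/A_j)\cdot A_j$. This follows from the multiplicativity of the expansions, the same Cauchy-product fact used above.

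Combining the three steps, $\{W(g): g\in\mathcal P_{<N-k}\}$ is a subspace of $C^\perp$ of dimension $N-k=\dim C^\perp$, hence equal to $C^\perp$.
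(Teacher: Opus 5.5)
Your proposal is correct and follows essentially the same route as the paper. Orthogonality comes from the residue theorem applied to $f(x)g(x)/A(x)$ together with the Hasse product rule at each $\alpha_j$, and equality then follows from injectivity of $g\mapsto W(g)$ (forcing $A\mid g$) plus a dimension count. Your explicit Cauchy-product justification for applying the product rule to the rational function $g/A_j$ usefully fills in a point the paper passes over, since Lemma~\ref{product-rule} is stated only for polynomials.
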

\begin{proof}
If $N-k=0$, then $\mathcal P_{<N-k}=\{0\}$, and the assertion is immediate.
We assume henceforth that $N-k>0$. For any $f\in\mathcal{P}_{<k}$, the corresponding codeword of $C$ is
\[
\operatorname{Ev}_{\mathcal{A},V}(f)
=
\bigl(v_{i,j}\partial^{(i-1)}f(\alpha_j)\bigr)_
{1\le i\le s,\ 1\le j\le r}.
\]
We show that this codeword is orthogonal to $W(g)$, for any $g\in\mathcal{P}_{<N-k}$.

For a fixed $j$, denote the contribution of the $j$-th column to the inner product by $S_j$. Substituting the definition of $w_{i,j}(g)$ and applying the product rule for Hasse derivatives,  we obtain $$\begin{aligned}
S_j &= \sum_{i=1}^{s} v_{i,j}\partial^{(i-1)}f(\alpha_j)\, w_{i,j}(g) \\
&= \sum_{i=1}^{s} \partial^{(i-1)}f(\alpha_j) \partial^{(s-i)} \left( \frac{g(x)}{A_j(x)} \right)\bigg|_{x=\alpha_j} \\
&= \partial^{(s-1)} \left( \frac{f(x)g(x)}{A_j(x)} \right)\bigg|_{x=\alpha_j}.
\end{aligned}$$

Since $A(x)=(x-\alpha_j)^sA_j(x),$ the rational function
\[
R(x):=\frac{f(x)g(x)}{A(x)}=
\frac{1}{(x-\alpha_j)^s}
\cdot
\frac{f(x)g(x)}{A_j(x)}.
\]
Thus, the residue of $R(x)$ at $x=\alpha_j$ is
\[
\operatorname{Res}_{x=\alpha_j}R(x)
=
\partial^{(s-1)}
\left(
\frac{f(x)g(x)}{A_j(x)}
\right)\bigg|_{x=\alpha_j}
=
S_j.
\]

Therefore 
\[
\left\langle \operatorname{Ev}_{\mathcal{A},V}(f), W(g)\right\rangle
=
\sum_{j=1}^{r}S_j
=
\sum_{j=1}^{r}
\operatorname{Res}_{x=\alpha_j}
\frac{f(x)g(x)}{A(x)}.
\]
Since $\deg(fg) \le (k-1)+(N-k-1) = N-2$ and $\deg A=N$, the rational function $R(x)=f(x)g(x)/A(x)$ has zero residue at infinity. By the residue theorem (Theorem \ref{resthm}),
\[
\sum_{j=1}^{r}
\operatorname{Res}_{x=\alpha_j}R(x)
=
0.
\]
Hence, for all $f\in\mathcal{P}_{<k}$ and  $g\in\mathcal{P}_{<N-k}$,
\[
\left\langle \operatorname{Ev}_{\mathcal{A},V}(f), W(g)\right\rangle=0.
\]
Thus
\[
\{W(g):g\in\mathcal{P}_{<N-k}\}
\subseteq C^\perp.
\]

It remains to compare dimensions. We first show that the map
\[
g\longmapsto W(g)
\]
from $\mathcal{P}_{<N-k}$ to
$\operatorname{Mat}_{s\times r}(\mathbb{F}_q)$ is injective. Suppose
$W(g)=0$. Then for every $j$ and every $0\le a\le s-1$,
\[
\partial^{(a)}
\left(
\frac{g(x)}{A_j(x)}
\right)\bigg|_{x=\alpha_j}
=
0.
\]
Since $A_j(\alpha_j)\ne 0$,  $g$ must have a zero of
multiplicity at least $s$ at each $\alpha_j$. Hence $A(x)\mid g(x)$.
But $\deg A=N$, whereas
\[
\deg g<N-k\le N-1.
\]
Therefore $g=0$, proving injectivity.

Consequently,
\[
\dim\{W(g):g\in\mathcal{P}_{<N-k}\}=N-k.
\]
Since $C$ has dimension $k$, we have $\dim C^\perp=N-k.$
Thus, the above inclusion is an equality. This completes the proof.
\end{proof}

\begin{remark}\label{rem1}
Theorem~\ref{duality-thm}  shows two features of the Euclidean dual of HRS codes.

First, the Hasse derivative orders are reversed within each evaluation block. Indeed, the $i$-th row of a dual codeword involves derivatives
of order $s-i$, while the $i$-th row of the original HRS code involves derivatives of order $i-1$.

Second, in general, there is an additional local triangular change of
coordinates. To see this, fix an evaluation point $\alpha_j$ and define
the reverse-order derivative vector
\[
\mathbf d_j(g)
=
\begin{pmatrix}
\partial^{(s-1)}g(\alpha_j)\\
\partial^{(s-2)}g(\alpha_j)\\
\vdots\\
\partial^{(0)}g(\alpha_j)
\end{pmatrix}
\]
and the $j$-th column of the dual codeword
\[
\mathbf w_j(g)
=
\begin{pmatrix}
w_{1,j}(g)\\
w_{2,j}(g)\\
\vdots\\
w_{s,j}(g)
\end{pmatrix}.
\]
For any $0 \leq i \leq s-1$, let $\gamma_{i,j}=\partial^{(i)}
\left(
\frac{1}{A_j(x)}
\right)\bigg|_{x=\alpha_j}$. Then the component formula in Theorem~\ref{duality-thm} can be written as
\[
\mathbf w_j(g)=D_jU_j\mathbf d_j(g),
\]
where
\[
D_j=
\operatorname{diag}
\left(
v_{1,j}^{-1},v_{2,j}^{-1},\ldots,v_{s,j}^{-1}
\right),
\]
and
\[
U_j=
\begin{pmatrix}
\gamma_{0,j} & \gamma_{1,j} & \gamma_{2,j} & \cdots & \gamma_{s-1,j}\\
0 & \gamma_{0,j} & \gamma_{1,j} & \cdots & \gamma_{s-2,j}\\
0 & 0 & \gamma_{0,j} & \cdots & \gamma_{s-3,j}\\
\vdots & \vdots & \vdots & \ddots & \vdots\\
0 & 0 & 0 & \cdots & \gamma_{0,j}
\end{pmatrix}.
\]
Here $U_j$ is upper triangular with respect to the reverse derivative
order. Since
\[
\gamma_{0,j}
=
\frac{1}{A_j(\alpha_j)}
\ne 0,
\]
the matrix $U_j$ is invertible. Thus, for each evaluation block, the
dual coordinates are obtained from the reverse-order Hasse derivative
vector by an invertible upper-triangular transformation, followed by a
diagonal scaling.

Consequently, in the general case, $C^\perp$ is not necessarily an HRS
code. Rather, it is blockwise
upper-triangularly equivalent to a reverse-order HRS evaluation code. The
transformations are local in the evaluation points, since the matrices
$U_j$ may depend on $j$.  
\end{remark}

\begin{remark}
Theorem~\ref{duality-thm} provides an explicit formula for the Euclidean dual of an HRS code. When the multiplicity $s=1$, the HRS code reduces to a classical GRS code, and the formula recovers the well-known GRS duality:
\[
w_{1,j}(g) = \frac{1}{v_{1,j}} \frac{g(\alpha_j)}{\prod_{\ell\neq j} (\alpha_j - \alpha_\ell)}.
\]
For general $s>1$, the Euclidean dual is generally not an HRS code with the same evaluation order and multiplicities. Instead, it can be obtained from a reverse-order HRS evaluation code through invertible local upper-triangular transformations in each block. Consequently, the GRS-type closure property previously asserted in~\cite{CH26}, namely that the Euclidean dual of an HRS code is again an HRS code, is not correct. The next example computes the dual code explicitly in a concrete case and exhibits a counterexample to the claimed closure property.
\end{remark}

\begin{example}\label{example-dual}
Consider an HRS code over $\mathbb F_5$ with multiplicity $s=2$, dimension $k=4$, and evaluation set
$\mathcal A=\{0,1,2\}\subseteq \mathbb F_5$. Let $V$ be the all-one
multiplier matrix. Then $N=rs=6$, and $C=\operatorname{HRS}_4(\mathcal A,V,2,3)$ is generated by the evaluations of $\{1,x,x^2,x^3\}$ together with their
first Hasse derivatives. The corresponding codewords are
\[
M(1)=
\begin{bmatrix}
1&1&1\\
0&0&0
\end{bmatrix},
\qquad
M(x)=
\begin{bmatrix}
0&1&2\\
1&1&1
\end{bmatrix},
\]
and
\[
M(x^2)=
\begin{bmatrix}
0&1&4\\
0&2&4
\end{bmatrix},
\qquad
M(x^3)=
\begin{bmatrix}
0&1&3\\
0&3&2
\end{bmatrix}.
\]

We now compute the dual code using Theorem~\ref{duality-thm}. Since
$N-k=2$, the dual code is obtained from polynomials $g\in\mathcal P_{<2}$.
For $j=1,2,3$, set $A_j(x)=\prod_{m\ne j}(x-\alpha_m)^2$.
A direct calculation gives $\bigl(A_1(0),\partial^{(1)}A_1(0)\bigr)=(4,3),~
\bigl(A_2(1),\partial^{(1)}A_2(1)\bigr)=(1,0),~
\bigl(A_3(2),\partial^{(1)}A_3(2)\bigr)=(4,2)$.
For $s=2$, Theorem~\ref{duality-thm} gives
\[
w_{1,j}(g)=
\partial^{(1)}
\left(\frac{g(x)}{A_j(x)}\right)\bigg|_{x=\alpha_j},
\qquad
w_{2,j}(g)=
\frac{g(\alpha_j)}{A_j(\alpha_j)} .
\]
Equivalently, since $A_j(\alpha_j)^2=1$ in $\mathbb F_5$, we have
\[
w_{1,j}(g)
=
\partial^{(1)}g(\alpha_j)A_j(\alpha_j)
-
g(\alpha_j)\partial^{(1)}A_j(\alpha_j),
\qquad
w_{2,j}(g)=g(\alpha_j)A_j(\alpha_j)^{-1}.
\]
Applying these formulas to the basis $\{1,x\}$ of $\mathcal P_{<2}$, we obtain
\[
W(1)=
\begin{bmatrix}
2&0&3\\
4&1&4
\end{bmatrix},
\qquad
W(x)=
\begin{bmatrix}
4&1&0\\
0&1&3
\end{bmatrix}.
\]
Thus $C^\perp=\operatorname{span}_{\mathbb F_5}\{W(1),W(x)\}$. 
One checks directly that each of $W(1)$ and $W(x)$ is orthogonal to
$M(1),M(x),M(x^2),M(x^3)$ under the Euclidean inner product. This gives a
concrete verification of the dual formula in Theorem~\ref{duality-thm}.

We finally show that this dual code is not an HRS code. Suppose otherwise that $C^\perp=\operatorname{HRS}_2(\mathcal B,U,2,3)$
for some evaluation set $\mathcal B=\{\beta_1,\beta_2,\beta_3\}$ and some
multiplier matrix $U=(u_{i,j})\in(\mathbb F_5^\ast)^{2\times 3} $.
Then the constant polynomial $1$ would give a nonzero codeword of
$C^\perp$ whose second row is identically zero, namely
\[
\begin{bmatrix}
u_{1,1}&u_{1,2}&u_{1,3}\\
0&0&0
\end{bmatrix}.
\]
However, every codeword of the present dual code has the form
$aW(1)+bW(x)$, where $a,b\in\mathbb F_5$. Its second row is
\[
a(4,1,4)+b(0,1,3)=(4a,a+b,4a+3b).
\]
If this row is zero, then $4a=0$, so $a=0$. Then $a+b=0$ gives
$b=0$. Hence the only codeword of $C^\perp$ whose second row is zero is
the zero codeword. This contradicts the existence of the above nonzero
constant-polynomial codeword. Therefore $C^\perp$ cannot be represented as any HRS code with multiplicity $2$ and three evaluation points over $\mathbb F_5$.
\end{example}

Theorem \ref{duality-thm} gives the dual code in terms of the local expansion
of $1/A_j(x)$. We now consider an important special case where the
evaluation set is the whole finite field. In this case, the local
expansion of $1/A_j(x)$ can be computed explicitly. Moreover, when
$s<q$, all nonconstant terms in this expansion are irrelevant for the
first $s$ Hasse derivatives, and the dual code takes a particularly
simple reverse-order form. We first introduce the row-reversal map
\[
\operatorname{Rev}:
\operatorname{Mat}_{s\times r}(\mathbb{F}_q)
\longrightarrow
\operatorname{Mat}_{s\times r}(\mathbb{F}_q)
\]
defined by
\[
\operatorname{Rev}\bigl((a_{i,j})\bigr)
=
(a_{s-i+1,j}).
\]

\begin{corollary}\label{Full-Evaluation-Points}
Assume that the evaluation set is the whole field $\mathcal{A}=\{\alpha_1,\ldots,\alpha_q\}=\mathbb{F}_q$, and let $C=\operatorname{HRS}_k(\mathcal{A},V,s,q).$
Then the dual code is given by $C^\perp=\{W(g)=(w_{i,j}(g)): g\in\mathcal{P}_{<sq-k}\}$, where
\[
w_{i,j}(g)
=
\frac{(-1)^s}{v_{i,j}}
\sum_{\tau=0}^{\left\lfloor (s-i)/(q-1)\right\rfloor}
\binom{s+\tau-1}{\tau}
\partial^{(s-i-\tau(q-1))}g(\alpha_j).
\]
In particular, if $s<q$, define
$V^\perp=(v^\perp_{i,j})_{1\le i\le s,\ 1\le j\le q}
$
by
$v^\perp_{i,j}=\frac{(-1)^s}{v_{s-i+1,j}}.
$
Then
\[
C^\perp
=
\operatorname{Rev}
\left(
\operatorname{HRS}_{sq-k}(\mathcal{A},V^\perp,s,q)
\right).
\]
\end{corollary}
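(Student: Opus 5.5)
The plan is to specialize Theorem~\ref{duality-thm} to $\mathcal A=\mathbb F_q$. The whole argument then reduces to computing the Hasse--Taylor expansion of $1/A_j(x)$ at $x=\alpha_j$ in closed form.

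First I compute $A$ and $A_j$ for the full field. Since $\mathcal A=\mathbb F_q$, we have $\prod_{\ell}(x-\alpha_\ell)=x^q-x$, so $A(x)=(x^q-x)^s$ and
\[
A_j(x)=\left(\frac{x^q-x}{x-\alpha_j}\right)^s .
\]
Put $t=x-\alpha_j$. Because $\alpha_j^q=\alpha_j$, we get $x^q-x=(t+\alpha_j)^q-(t+\alpha_j)=t^q-t$. Hence $(x^q-x)/(x-\alpha_j)=t^{q-1}-1$ and
\[
A_j(x)=(-1)^s\bigl(1-t^{q-1}\bigr)^s .
\]
Expanding $(1-u)^{-s}$ as a formal power series in $u=t^{q-1}$ then gives the local expansion
\[
\frac{1}{A_j(x)}=(-1)^s\sum_{\tau\ge 0}\binom{s+\tau-1}{\tau}(x-\alpha_j)^{\tau(q-1)} .
\]
Thus $\partial^{(m)}(1/A_j)(\alpha_j)$ equals $(-1)^s\binom{s+\tau-1}{\tau}$ when $m=\tau(q-1)$, and equals $0$ otherwise. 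Here I use the paper's convention that the Hasse derivatives at $\alpha_j$ of a regular rational function are its Taylor coefficients.

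Next I apply the product rule (Lemma~\ref{product-rule}, extended to a polynomial times a function regular at $\alpha_j$). Equivalently, I multiply the two Taylor expansions at $\alpha_j$ and read off coefficients. With $m=s-i$ this gives
\[
\partial^{(s-i)}\!\left(\frac{g}{A_j}\right)\!(\alpha_j)=\sum_{\tau}\partial^{(\tau(q-1))}\!\left(\frac1{A_j}\right)\!(\alpha_j)\,\partial^{(s-i-\tau(q-1))}g(\alpha_j).
\]
The only nonzero terms have $0\le \tau(q-1)\le s-i$, which gives the stated range $\tau\le\lfloor (s-i)/(q-1)\rfloor$. Substituting this into the formula of Theorem~\ref{duality-thm}, with $N=sq$, yields the general formula for $w_{i,j}(g)$.

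For the case $s<q$, note that $s-i\le s-1<q-1$ for every $1\le i\le s$. So only $\tau=0$ survives, and
\[
w_{i,j}(g)=\frac{(-1)^s}{v_{i,j}}\,\partial^{(s-i)}g(\alpha_j).
\]
On the other side, the $(i,j)$ entry of $\operatorname{Rev}(\operatorname{Ev}_{\mathcal A,V^\perp}(g))$ is the $(s-i+1,j)$ entry of $\operatorname{Ev}_{\mathcal A,V^\perp}(g)$. That entry is $v^\perp_{s-i+1,j}\,\partial^{(s-i)}g(\alpha_j)=\frac{(-1)^s}{v_{i,j}}\partial^{(s-i)}g(\alpha_j)$, which matches. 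Since $g$ ranges over $\mathcal P_{<sq-k}$ in both descriptions, the two codes coincide. The case $k=sq$ is trivial, and all entries of $V^\perp$ are nonzero, so $V^\perp$ is a valid multiplier matrix.

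The main obstacle is the first step: rewriting $x^q-x$ in the local parameter $t$ as $t^q-t$, which uses $\alpha_j^q=\alpha_j$ together with the Frobenius identity $(t+\alpha_j)^q=t^q+\alpha_j^q$. One must also justify identifying the formal inverse of $(1-t^{q-1})^s$ with the Hasse--Taylor coefficients of $1/A_j$. Once this closed-form expansion is in hand, everything else is bookkeeping.
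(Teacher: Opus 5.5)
Your proposal is correct and follows essentially the same route as the paper. Both specialize Theorem~\ref{duality-thm}, expand $1/A_j$ at $\alpha_j$ as $(-1)^s(1-y^{q-1})^{-s}$ with $y=x-\alpha_j$, read off the Hasse derivatives via the product rule, and observe that only $\tau=0$ survives when $s<q$. The only cosmetic difference is how $A_j=(y^{q-1}-1)^s$ is obtained: you get it from $x^q-x=y^q-y$ via Frobenius, while the paper uses $\{\alpha_\ell-\alpha_j:\ell\ne j\}=\mathbb F_q^*$ and $\prod_{a\in\mathbb F_q^*}(y-a)=y^{q-1}-1$.
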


\begin{proof}
By Theorem~\ref{duality-thm}, every codeword of $C^\perp$ is of the form
$W(g)=(w_{i,j}(g))$, where $g\in\mathcal{P}_{<sq-k}$ and $w_{i,j}(g)
=
\frac{1}{v_{i,j}}
\partial^{(s-i)}
\left(
\frac{g(x)}{A_j(x)}
\right)\bigg|_{x=\alpha_j}$.
We compute the local expansion of $1/A_j(x)$. Fix $j$, and set
$y=x-\alpha_j.$ Since the evaluation set is the whole field,
$\{\alpha_\ell-\alpha_j:\ell\ne j\}=\mathbb{F}_q^\ast.$
Hence
\[
A_j(x)
=
\prod_{\ell\ne j}(x-\alpha_\ell)^s
=
\prod_{a\in\mathbb{F}_q^\ast}(y-a)^s
=
(y^{q-1}-1)^s.
\]
Therefore
\[
\frac{1}{A_j(x)}
=
(y^{q-1}-1)^{-s}
=
(-1)^s(1-y^{q-1})^{-s}.
\]
Using the binomial expansion
$(1-T)^{-s}
=
\sum_{\tau\ge 0}\binom{s+\tau-1}{\tau}T^\tau,$
we obtain
\[
\frac{1}{A_j(x)}
=
(-1)^s
\sum_{\tau\ge 0}
\binom{s+\tau-1}{\tau}
y^{\tau(q-1)}.
\]
Substituting this expansion into Theorem~\ref{duality-thm} gives
\[
w_{i,j}(g)
=
\frac{(-1)^s}{v_{i,j}}
\sum_{\tau=0}^{\left\lfloor (s-i)/(q-1)\right\rfloor}
\binom{s+\tau-1}{\tau}
\partial^{(s-i-\tau(q-1))}g(\alpha_j),
\]
which proves the first assertion.

Now assume $s<q$. Then,
$0\le s-i\le s-1<q-1.$
Thus, only the term $\tau=0$ appears in the above summation, and hence
\[w_{i,j}(g)=\frac{(-1)^s}{v_{i,j}}\partial^{(s-i)}g(\alpha_j).\]
Applying the row-reversal map, we get
\[
\operatorname{Rev}(W(g))_{i,j}
=
W(g)_{s-i+1,j}
=
\frac{(-1)^s}{v_{s-i+1,j}}
\partial^{(i-1)}g(\alpha_j).
\]
By the definition of $V^\perp$, this becomes
\[
\operatorname{Rev}(W(g))_{i,j}
=
v^\perp_{i,j}\partial^{(i-1)}g(\alpha_j).
\]
As $g$ ranges over $\mathcal{P}_{<sq-k}$, the row-reversed dual code is
$\operatorname{HRS}_{sq-k}(\mathcal{A},V^\perp,s,q),$ which proves
\[
\operatorname{Rev}(C^\perp)
=
\operatorname{HRS}_{sq-k}(\mathcal{A},V^\perp,s,q).
\]
Since $\operatorname{Rev}$ is an involution, the equivalent expression
for $C^\perp$ follows.
\end{proof}

\begin{example}
Consider an HRS code over $\mathbb{F}_3$ with multiplicity $s = 2$, dimension $k = 3$, and $r = 3$ evaluation points $\mathcal{A} = \{0, 1, 2\}=\mathbb{F}_3$. Let $V$ be the all-one multiplier matrix. The length of the code is $N = rs = 6$.
By definition, $C = \operatorname{HRS}_3(\mathcal{A}, V, 2, 3)$ is generated by 
$$M(1) = \begin{bmatrix} 1 & 1 & 1 \\ 0 & 0 & 0 \end{bmatrix}, \quad M(x) = \begin{bmatrix} 0 & 1 & 2 \\ 1 & 1 & 1 \end{bmatrix}, \quad M(x^2) = \begin{bmatrix} 0 & 1 & 1 \\ 0 & 2 & 1 \end{bmatrix}.$$
By Corollary~\ref{Full-Evaluation-Points}, the dual code $C^\perp$ can be generated as follows. Applying  reverse-order evaluation to the dual basis $\{1, x, x^2\}$ over $\mathbb{F}_3$ yields the corresponding dual codewords: $$W(1) = \begin{bmatrix} 0 & 0 & 0 \\ 1 & 1 & 1 \end{bmatrix}, \quad W(x) = \begin{bmatrix} 1 & 1 & 1 \\ 0 & 1 & 2 \end{bmatrix}, \quad W(x^2) = \begin{bmatrix} 0 & 2 & 1 \\ 0 & 1 & 1 \end{bmatrix}.$$
Consequently, for any primal codeword $\mathbf{c}_f \in C$ and any dual codeword $\mathbf{w}_g \in C^\perp$, their standard dot product over $\mathbb{F}_3$ identically vanishes. Comparing these generators, we see that $W$ matches $M$ exactly after reversing the order of the rows in each evaluation block. Therefore, in what follows, we consider the class of reverse self-dual HRS codes.
\end{example}

\section{Reverse Self-Dual HRS Codes}\label{secIV}

The dual description in Section~\ref{secIII} shows that the Hasse derivative
orders are naturally reversed in the dual structure of HRS codes. This
motivates the following notion of self-duality, where the inner product
is taken after reversing the order of the rows in each evaluation block.

For $A,B\in \operatorname{Mat}_{s\times r}(\mathbb{F}_q),$
define the reverse bilinear form by
\[
[A,B]_{\rm rev}
:=
\langle A,\operatorname{Rev}(B)\rangle
=
\sum_{j=1}^{r}\sum_{i=1}^{s}a_{i,j}b_{s-i+1,j},
\]
where the row-reversal map $\operatorname{Rev}(\cdot)$ was defined in Section~\ref{secIII}. For a linear code $C\subseteq \operatorname{Mat}_{s\times r}(\mathbb{F}_q),$ its reverse dual is defined as
\[
C^{\perp_{\rm rev}}
:=
\left\{
B\in \operatorname{Mat}_{s\times r}(\mathbb{F}_q):
[A,B]_{\rm rev}=0\ \text{for all } A\in C
\right\}.
\]
Equivalently, $C^{\perp_{\rm rev}}
=
\operatorname{Rev}(C^\perp)$.
We say that $C$ is reverse self-orthogonal if
$C\subseteq C^{\perp_{\rm rev}},$
and reverse self-dual if
$C=C^{\perp_{\rm rev}}.$ 	Equivalently, reverse self-duality means $C^\perp=\operatorname{Rev}(C)$. In particular, a reverse self-dual code must satisfy
	\[
	\dim C=\dim C^{\perp_{\rm rev}}=sr-\dim C,
	\]
	and hence $\dim C=sr/2$.
Thus, for an HRS code $\operatorname{HRS}_k(\mathcal{A},V,s,r)$, reverse
self-duality can occur only when $sr$ is even and $k=sr/2$.

In the remainder of this section, we first study the multiplicity-two
case, where the reverse self-duality condition has a simple moment
description. We then extend the criterion and constructions to general
multiplicity.

\subsection{The Multiplicity-Two Case}
We begin with the case $s=2$. Let
$\mathcal{A}=\{\alpha_1,\ldots,\alpha_r\}\subseteq \mathbb{F}_q$ be a set of pairwise distinct evaluation points. For $j=1,2,\cdots, r$, let $v_{1,j}$ and $v_{2,j}$ be nonzero elements of $\mathbb{F}_q$, and set the multiplier matrix $V=\begin{pmatrix} v_{1,j} \\ v_{2,j}\end{pmatrix}_{1 \leq j \leq r}\in (\mathbb{F}^*_q)^{2\times r}$. In this subsection, we consider the HRS code
\[
C=\operatorname{HRS}_r(\mathcal{A},V,2,r).
\]
In this case, the reverse bilinear form becomes
\[
[x,y]_{\rm rev}
=
\sum_{j=1}^{r}
\bigl(x_{1,j}y_{2,j}+x_{2,j}y_{1,j}\bigr).
\]
The following theorem reduces reverse self-duality to the vanishing of a
finite sequence of weighted power sums over the evaluation set.

\begin{theorem}\label{thm:criterion}
Set $\lambda_j:=v_{1,j}v_{2,j}$ for $1\le j\le r.$
The HRS code $C=\operatorname{HRS}_r(\mathcal{A},V,2,r)$ is reverse self-dual if and only if
\begin{equation}\label{eq:criterion1}
(m+1)\sum_{j=1}^{r}\lambda_j\alpha_j^m=0,
\qquad
0\le m\le 2r-3.
\end{equation}
\end{theorem}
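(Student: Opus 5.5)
Since $\dim C = r = sr/2$ when $s=2$, reverse self-duality is equivalent to reverse self-orthogonality, i.e. $[\operatorname{Ev}(f),\operatorname{Ev}(g)]_{\rm rev}=0$ for all $f,g\in\mathcal P_{<r}$. The plan is to compute this bilinear form explicitly. For $f,g\in\mathcal P_{<r}$,
\[
[\operatorname{Ev}(f),\operatorname{Ev}(g)]_{\rm rev}
=\sum_{j=1}^r\bigl(v_{1,j}f(\alpha_j)\,v_{2,j}\partial^{(1)}g(\alpha_j)+v_{2,j}\partial^{(1)}f(\alpha_j)\,v_{1,j}g(\alpha_j)\bigr)
=\sum_{j=1}^r\lambda_j\,\partial^{(1)}(fg)(\alpha_j),
\]
by the product rule (Lemma~\ref{product-rule}) with $n=1$. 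Thus the code is reverse self-dual if and only if the linear functional $h\mapsto\sum_j\lambda_j\partial^{(1)}h(\alpha_j)$ vanishes on every product $fg$ with $f,g\in\mathcal P_{<r}$.

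Next I would observe that the linear span of such products is exactly $\mathcal P_{<2r-1}$, since the monomials $x^a x^b$ with $0\le a,b\le r-1$ give all $x^n$ with $0\le n\le 2r-2$. Hence the condition is equivalent to $\sum_j\lambda_j\partial^{(1)}(x^n)(\alpha_j)=0$ for $0\le n\le 2r-2$. Since $\partial^{(1)}x^n=n x^{n-1}$, the case $n=0$ is vacuous. Setting $m=n-1\in\{0,\ldots,2r-3\}$ gives exactly $(m+1)\sum_j\lambda_j\alpha_j^m=0$, which is \eqref{eq:criterion1}.

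No step is a serious obstacle. The only points needing care are two. First, reverse self-orthogonality together with the dimension equality $\dim C=r=\dim C^{\perp_{\rm rev}}$ indeed yields reverse self-duality; this holds because $\operatorname{Rev}$ is a bijection, so $\dim C^{\perp_{\rm rev}}=2r-r$. Second, the factor $m+1$ must be kept, since it may vanish in characteristic $p$, which is why the criterion is stated with it rather than as plain vanishing of the power sums.
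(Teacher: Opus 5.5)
Your proposal is correct and follows essentially the same route as the paper. Both use the dimension count to reduce to reverse self-orthogonality, then the product rule to collapse the form to $\sum_j\lambda_j\partial^{(1)}(fg)(\alpha_j)$, then monomials with $m=a+b-1$. Your observation that the products $fg$ span $\mathcal P_{<2r-1}$ is just a repackaging of the paper's converse step, where it chooses $0\le a,b\le r-1$ with $a+b=m+1$.
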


\begin{proof}
 For each $0\le a\le r-1$, define
		\[
	c^{(a)}:=
	\left(\begin{array}{c}
		v_{1,j}\partial^{(0)}x^a(\alpha_j)\\
		v_{2,j}\partial^{(1)}x^a(\alpha_j)
	\end{array}\right)_{1 \leq j \leq r}.
	\]
    	Here
	\[
	\partial^{(1)}x^a(\alpha_j)=
	\begin{cases}
		a\alpha_j^{a-1}, & a\ge 1,\\
		0, & a=0.
	\end{cases}
	\]
	These vectors span $C$. Since $\dim(C)=r$, $C$ is reverse self-dual if and only if it is reverse self-orthogonal, which is also equivalent to
	\[
	[c^{(a)},c^{(b)}]_\mathrm{rev}=0,
	\qquad (0\le a,b\le r-1).
	\]
Using the product rule for Hasse derivatives, we obtain
	\begin{align*}
		[c^{(a)},c^{(b)}]_{\rm rev}
		&=\sum_{j=1}^r\lambda_j\Big(
		\partial^{(0)}(x^a)(\alpha_j)\partial^{(1)}(x^b)(\alpha_j)
		+\partial^{(1)}(x^a)(\alpha_j)\partial^{(0)}(x^b)(\alpha_j)\Big)\\
		&=\sum_{j=1}^r\lambda_j\partial^{(1)}(x^{a+b})(\alpha_j).
		\end{align*}
	If $a+b=0$, this expression is zero. If $a+b\ge1$, then
		\[
		[c^{(a)},c^{(b)}]_{\rm rev}
		=(a+b)\sum_{j=1}^{r}\lambda_j\alpha_j^{a+b-1}.
		\]
		Putting $m=a+b-1$ gives the identities in \eqref{eq:criterion1}, where $0\le m\le 2r-3$. Conversely, for every $0\le m\le 2r-3$, choose $0\le a,b\le r-1$ such that $a+b=m+1$. Hence the moment identities in \eqref{eq:criterion1} are necessary and sufficient for reverse self-orthogonality, and therefore for reverse self-duality.
\end{proof}

The following lemma is useful in our constructions.

\begin{lemma}\label{lem:moment}
	Let
	$	H(x)=\prod_{i=1}^{r}(x-\alpha_i)= x^r+h_{r-1}x^{r-1}+\cdots+h_1x+h_0.$
For any $m\ge 0$, define weighted moments
	\[
	S_m:=\sum_{j=1}^r\frac{\alpha_j^m}{H'(\alpha_j)}.
	\]
	Then
    \begin{align}\label{eq:lowmom}
    S_m = \begin{cases} 0, & 0 \le m \le r-2, \\ 1, & m = r-1. \end{cases},
    \end{align}
	and for every $m\ge r$,
	\begin{equation}\label{eq:recur-general}
		S_m+h_{r-1}S_{m-1}+h_{r-2}S_{m-2}+\cdots+h_1S_{m-r+1}+h_0S_{m-r}=0.
	\end{equation}
\end{lemma}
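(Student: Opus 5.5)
The plan is to derive both parts of Lemma~\ref{lem:moment} from the partial fraction expansion of $x^m/H(x)$ and the residue theorem (Theorem~\ref{resthm}) applied with $s=1$. Since the $\alpha_j$ are distinct, $H$ has simple zeros, and
\[
\operatorname{Res}_{x=\alpha_j}\frac{x^m}{H(x)}=\frac{\alpha_j^m}{H'(\alpha_j)},
\]
because $H(x)=(x-\alpha_j)H_j(x)$ with $H_j(x)=\prod_{\ell\ne j}(x-\alpha_\ell)$ and $H'(\alpha_j)=H_j(\alpha_j)\neq0$. This is the local formula recorded after Theorem~\ref{resthm} with $s=1$ and $\phi=x^m/H_j$. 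Hence $S_m=\sum_j\operatorname{Res}_{x=\alpha_j}x^m/H(x)$.

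For $0\le m\le r-2$ we have $\deg x^m\le \deg H-2$, so Theorem~\ref{resthm} gives $S_m=0$ directly. For $m=r-1$, the residue theorem gives $S_{r-1}=-\operatorname{Res}_{x=\infty}x^{r-1}/H(x)$. To compute this, we use the local parameter $t=1/x$:
\[
\frac{x^{r-1}}{H(x)}\,dx=\frac{t^{1-r}}{t^{-r}(1+h_{r-1}t+\cdots)}\cdot\left(-\frac{dt}{t^2}\right)=-\frac{1}{t}\bigl(1+O(t)\bigr)\,dt,
\]
so the residue at infinity is $-1$ and $S_{r-1}=1$. Alternatively, without touching infinity, one can compare leading coefficients in the Lagrange interpolation identity $x^{r-1}=\sum_j \alpha_j^{r-1}H_j(x)/H_j(\alpha_j)$, which gives $\sum_j\alpha_j^{r-1}/H'(\alpha_j)=1$. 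Comparing the coefficient of $x^{r-1}$ in the interpolation of $x^m$ for $m\le r-2$ also re-proves the vanishing. I would present the interpolation argument, since it is elementary and avoids the residue at infinity, which is the only slightly delicate point.

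For the recurrence with $m\ge r$, we use $H(\alpha_j)=0$. Multiplying by $\alpha_j^{m-r}$ gives
\[
\alpha_j^m+h_{r-1}\alpha_j^{m-1}+\cdots+h_0\alpha_j^{m-r}=0.
\]
Dividing by $H'(\alpha_j)$ and summing over $j$ yields \eqref{eq:recur-general} immediately, with no restriction beyond $m-r\ge0$.

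I expect no real obstacle here. The only point needing care is the normalization $S_{r-1}=1$, either through the sign of the residue at infinity or through the leading coefficient of the Lagrange interpolation.
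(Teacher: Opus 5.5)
Your proposal is correct, and the argument you say you would present is exactly the paper's proof: compare the coefficient of $x^{r-1}$ in the Lagrange interpolation of $x^m$ for $0\le m\le r-1$, then multiply $H(\alpha_j)=0$ by $\alpha_j^{m-r}/H'(\alpha_j)$ and sum over $j$. The residue-theorem alternative you sketch (residue $-1$ at infinity for $x^{r-1}/H(x)$) is also valid and mirrors the paper's later use of residues, but it is not needed here.
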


\begin{proof}
	By the Lagrange interpolation formula, for any $0 \leq m \leq r-1$,
    \[x^m=\sum_{j=1}^r\frac{\prod_{i \neq j}(x-\alpha_i)}{H'(\alpha_j)}\alpha_j^m.\]
    The identities in \eqref{eq:lowmom}  follow by comparing the coefficients of the term $x^{r-1}$ in both sides of the above formula.
	
	We now derive the recurrence \eqref{eq:recur-general} in detail. Since each $\alpha_j$ is a root of $H$, we have
	\[
	\alpha_j^r+h_{r-1}\alpha_j^{r-1}+\cdots+h_1\alpha_j+h_0=0
	\qquad (1\le j\le r).
	\]
	Then, for any $m\ge r$,
	\[
	\frac{\alpha_j^m}{H'(\alpha_j)}
	+h_{r-1}\frac{\alpha_j^{m-1}}{H'(\alpha_j)}
	+\cdots
	+h_1\frac{\alpha_j^{m-r+1}}{H'(\alpha_j)}
	+h_0\frac{\alpha_j^{m-r}}{H'(\alpha_j)}
	=0.
	\]
	Summing over all $j=1,\dots,r$ gives
	\[
	S_m+h_{r-1}S_{m-1}+h_{r-2}S_{m-2}+\cdots+h_1S_{m-r+1}+h_0S_{m-r}=0,
	\]
	which is precisely \eqref{eq:recur-general}.
\end{proof}
We now give several explicit families satisfying the moment conditions in
Theorem~\ref{thm:criterion}. The first is based on additive cosets.

\begin{theorem}[Additive coset construction for $s=2$]\label{thm:additive-single}
	Let $r=p^t$ for some $t\geq 1$,  and assume that $\mathbb{F}_r$ is a subfield of $ \mathbb{F}_q$. Fix $a\in\mathbb{F}_q$ and $\beta\in\mathbb{F}_q^*$.
	Choose $\lambda\in\mathbb{F}_q^*$ and arbitrary nonzero scalars $u_b\in\mathbb{F}_q^*$ for $b\in\mathbb{F}_r$.  Let $\mathcal{A}=a+\beta\mathbb{F}_r=\{\alpha_b=a+\beta b: b\in\mathbb{F}_r\}$.  Define the multiplier matrix
	\[
	V=\begin{pmatrix} v_{1,b} \\ v_{2,b}\end{pmatrix}_{b \in \mathbb{F}_r}=\begin{pmatrix} u_b \\ \lambda u_b^{-1}\end{pmatrix}_{b \in \mathbb{F}_r} \in (\mathbb{F}^*_q)^{2\times r}.
	\]
	Then the HRS code $C=\operatorname{HRS}_r(\mathcal{A},V, 2,r)$ is reverse self-dual.
\end{theorem}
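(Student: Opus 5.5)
By Theorem~\ref{thm:criterion}, we only need to verify the moment identities \eqref{eq:criterion1} for the given data. Here $\lambda_b=v_{1,b}v_{2,b}=u_b\cdot\lambda u_b^{-1}=\lambda$ is constant. So it suffices to show
\[
(m+1)\sum_{b\in\mathbb F_r}(a+\beta b)^m=0,\qquad 0\le m\le 2r-3 .
\]
Nothing further depends on the free scalars $u_b$.

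The plan is to split the range of $m$ at $r-2$.

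\emph{Range $0\le m\le r-2$.} Expand binomially:
\[
\sum_{b}(a+\beta b)^m=\sum_{\ell=0}^{m}\binom{m}{\ell}a^{m-\ell}\beta^\ell\sum_{b\in\mathbb F_r}b^\ell .
\]
We use the standard power sums over $\mathbb F_r$: $\sum_{b\in\mathbb F_r}b^\ell=0$ unless $\ell\ge1$ and $(r-1)\mid \ell$, in which case the sum is $-1$. The $\ell=0$ term contributes $r\cdot 1=0$ in characteristic $p$. Since every $\ell\le m\le r-2$, all these power sums vanish.

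Alternatively, one can apply Lemma~\ref{lem:moment}. Take $H(x)=\prod_b(x-a-\beta b)=\beta^r\big(((x-a)/\beta)^r-(x-a)/\beta\big)$. Then $H'(x)=-\beta^{r-1}$ is constant, so $S_m=-\beta^{1-r}\sum_b\alpha_b^m$, and \eqref{eq:lowmom} gives the vanishing directly.

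\emph{Range $r-1\le m\le 2r-3$.} Here the sums need not vanish, so we must exploit the factor $(m+1)$ instead. We need $(m+1)\sum_b\alpha_b^m=0$ in $\mathbb F_q$.

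In the binomial expansion, the only surviving terms have $\ell=r-1$, since the next admissible value $\ell=2(r-1)$ exceeds $m$. Each such term carries the coefficient $\binom{m}{r-1}$. It therefore suffices to show that $(m+1)\binom{m}{r-1}\equiv 0 \pmod p$ for $r-1\le m\le 2r-3$.

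Note that $(m+1)\binom{m}{r-1}=r\binom{m+1}{r}$. Since $r=p^t\equiv0\pmod p$, this product vanishes in $\mathbb F_q$.

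The one point requiring care is to state this identity as an integer identity before reducing mod $p$; it holds since $(m+1)\binom{m}{r-1}=\frac{(m+1)!}{(r-1)!\,(m-r+1)!}=r\binom{m+1}{r}$. With both ranges covered, all conditions in \eqref{eq:criterion1} hold, and Theorem~\ref{thm:criterion} gives reverse self-duality.

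I expect this second range to be the main step. The Lemma~\ref{lem:moment} recurrence is an alternative route, but the direct power-sum computation is cleaner and avoids needing an explicit form for $S_{r-1},\dots,S_{2r-3}$.
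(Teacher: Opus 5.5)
Your argument is correct, but it reaches the moment identities by a different route from the paper.

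\textbf{The paper's route.} The paper never expands $\sum_b\alpha_b^m$. It applies Lemma~\ref{lem:moment} to the trinomial vanishing polynomial $H(x)=x^r-\beta^{r-1}x+(\beta^{r-1}a-a^r)$. Its derivative is the constant $-\beta^{r-1}$, so $S_m$ is a fixed nonzero multiple of $\sum_b\alpha_b^m$.
\begin{itemize}
\item The interpolation part of the lemma gives vanishing for $0\le m\le r-2$.
\item The recurrence $S_m=\beta^{r-1}S_{m-r+1}-(\beta^{r-1}a-a^r)S_{m-r}$ then forces $S_m=0$ for $r\le m\le 2r-3$.
\item The single remaining index $m=r-1$ is handled by $m+1=r=0$.
\end{itemize}

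\textbf{Your route.} You expand binomially and use the classical power sums over $\mathbb F_r$. This yields the closed form $\sum_b\alpha_b^m=-\binom{m}{r-1}a^{m-r+1}\beta^{r-1}$ for $r-1\le m\le 2r-3$. You then handle this whole range at once via the integer identity $(m+1)\binom{m}{r-1}=r\binom{m+1}{r}$.

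\textbf{Comparison.} Your version is more elementary and self-contained, and it gives the moments explicitly. The paper's version needs only the sparse shape of $H$. It also keeps all the $s=2$ constructions, including Theorems~\ref{thm:mult-single} and~\ref{thm:mult-s=2}, inside the single framework of Lemma~\ref{lem:moment}.

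\textbf{One small imprecision.} Your remark that the sums ``need not vanish'' in the upper range is only true at $m=r-1$. Since $r-1=p^t-1$ has all its $t$ low base-$p$ digits equal to $p-1$, Lucas' theorem gives $\binom{m}{r-1}\equiv 0\pmod p$ for $r\le m\le 2r-3$. So on that range the moments themselves vanish, exactly as the paper shows. The factor $m+1$ is genuinely needed only at $m=r-1$.
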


\begin{proof}
	The vanishing polynomial of $\mathcal{A}$ is
	\[
	H(x):=\prod_{b \in \mathbb{F}_r}(x-\alpha_b)= (x-a)^r-\beta^{r-1}(x-a)=x^r-\beta^{r-1}x+\big(\beta^{r-1}a-a^r\big).
	\]
  Thus $H'(x)=-\beta^{r-1}$.
Then the weighted moments
	\[
	S_m=\sum_{j=1}^r\frac{\alpha_j^m}{H'(\alpha_j)}=-\beta^{1-r}\sum_{b\in\mathbb{F}_r}\alpha_b^m.
	\]
	By Lemma \ref{lem:moment}, $S_m=0$ for $0\le m\le r-2$
and
	\begin{equation}\label{eq:add-single-recur}
		S_m
		=
		\beta^{r-1}S_{m-r+1}
		-
		\big(\beta^{r-1}a-a^r\big)S_{m-r}.\qquad (m\ge r)
	\end{equation}
	When $r\le m\le 2r-3$, then $0\le m-r\le r-3$ and $1\le m-r+1\le r-2.$
	Therefore, the right-hand side of \eqref{eq:add-single-recur} is zero, and so
	\[
	S_m=0\qquad (r\le m\le 2r-3).
	\]
	Note that when $m=r-1$, then $m+1=r$ vanishes in characteristic $p$.
	Since $v_{1,b}v_{2,b}=\lambda$ is a nonzero constant, the criterion \eqref{eq:criterion1} in Theorem~\ref{thm:criterion} follows, which completes the proof.
\end{proof}

\begin{example}
Consider an HRS code over $\mathbb{F}_{16}$ with multiplicity $s = 2$, dimension $k = 4$, and $r = 4$. Let $\mathbb F_4=\{0,1,\omega,\omega^2\}$ be the subfield of $\mathbb F_{16}$, where $\omega^2+\omega+1=0$. We define the evaluation set $\mathcal{A} = \{ \alpha_b := 1 + b \mid b \in \mathbb{F}_4 \}=\{1,0,1+\omega, 1+\omega^2\}$. 
Choose $\lambda=1$ and set
$(u_0,u_1,u_\omega,u_{\omega^2})=(1,1,\omega,\omega^2)$. By Theorem~\ref{thm:additive-single}, the multiplier matrix is
$$V = \begin{bmatrix} 1 & 1 & \omega & \omega+1 \\ 1 & 1 & \omega+1 & \omega \end{bmatrix}.$$
Then the HRS code $C = \operatorname{HRS}_4(\mathcal{A}, V, 2, 4)$ is generated by 
$$M(1) = \begin{bmatrix} 1 & 1 & \omega & \omega+1 \\ 0 & 0 & 0 & 0 \end{bmatrix}, \quad M(x) = \begin{bmatrix} 1 & 0 & 1 & 1 \\ 1 & 1 & \omega+1 &\omega \end{bmatrix},$$
$$M(x^2) = \begin{bmatrix} 1 & 0 & \omega+1 & \omega \\ 0 & 0 & 0 & 0 \end{bmatrix}, \quad M(x^3) = \begin{bmatrix} 1 & 0 & \omega & \omega+1 \\ 1 & 0 & 1 & 1 \end{bmatrix}.$$
Let $G_a=M(x^a)$ for $0\le a\le 3$. Then
$C=\operatorname{span}_{\mathbb F_{16}}\{G_0,G_1,G_2,G_3\}$, and
	\[
	\operatorname{Rev}(C)
	=
	\operatorname{span}_{\mathbb F_{16}}
	\{\operatorname{Rev}(G_0),\operatorname{Rev}(G_1),
	\operatorname{Rev}(G_2),\operatorname{Rev}(G_3)\}.
	\]
A direct computation gives
	\[
	\left\langle G_a,\operatorname{Rev}(G_b)\right\rangle=0,
	\qquad 0\le a,b\le 3.
	\]
Thus $\operatorname{Rev}(C)\subseteq C^\perp$. Then $C^\perp=\operatorname{Rev}(C)$ since $\dim(\operatorname{Rev}(C))=\dim C=\dim C^\perp=4$. Equivalently 
$C=C^{\perp_{\rm rev}}$. This verifies the reverse self-duality asserted by Theorem~\ref{thm:additive-single}.
\end{example}

We next turn to multiplicative cosets.

\begin{theorem}[Multiplicative subgroup coset construction for $s=2$]\label{thm:mult-single}
	Let $G\subseteq\mathbb{F}_q^*$ be a multiplicative subgroup of order $r$, and let $\mathcal{A}=\beta G$ for some $\beta\in\mathbb{F}_q^*$. Assume that
	\[
	p:=\operatorname{char}(\mathbb{F}_q)\mid (r-1).
	\]
	Choose $\mu\in\mathbb{F}_q^*$ and arbitrary nonzero scalars $u_\alpha\in\mathbb{F}_q^*$ for $\alpha\in \mathcal{A}$. Define the multiplier matrix
	\[V=\begin{pmatrix} v_{1,\alpha} \\ v_{2,\alpha}\end{pmatrix}_{\alpha \in \mathcal{A}}=\begin{pmatrix} u_\alpha \\ \mu\alpha^2u_\alpha^{-1}\end{pmatrix}_{\alpha \in \mathcal{A}}\in (\mathbb{F}^*_q)^{2\times r}.
	\]
	Then the HRS code $C=\operatorname{HRS}_r(\mathcal{A},V, 2,r)$ is reverse self-dual.
\end{theorem}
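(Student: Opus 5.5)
We apply Theorem~\ref{thm:criterion}. Here $\lambda_\alpha=v_{1,\alpha}v_{2,\alpha}=\mu\alpha^2$, so it suffices to show that
\[
(m+1)\sum_{\alpha\in\mathcal{A}}\alpha^{m+2}=0,\qquad 0\le m\le 2r-3 .
\]
Put $n=m+2$, so that $2\le n\le 2r-1$ and the coefficient is $m+1=n-1$.

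The key tool is the power sum over a coset of the cyclic group $G$ of order $r$. Writing $\alpha=\beta g$, we have
\[
\sum_{\alpha\in\beta G}\alpha^n=\beta^n\sum_{g\in G}g^n,
\]
and $\sum_{g\in G}g^n$ equals $r$ if $r\mid n$ and $0$ otherwise. This is the standard geometric-sum argument: if $r\nmid n$, then $g\mapsto g^n$ is a nontrivial character of $G$, so choosing a generator $\zeta$ with $\zeta^n\neq1$ gives $(\zeta^n-1)\sum_{g\in G} g^n=0$. Alternatively, one can invoke Lemma~\ref{lem:moment} with $H(x)=x^r-\beta^r$, where $H'(\alpha)=r\alpha^{r-1}$; that route is messier, so the direct character sum is the one to use.

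For $2\le n\le 2r-1$, the only multiple of $r$ in range is $n=r$, since $r\ge2$ and $2r>2r-1$. If $r=1$ the range of $m$ is empty and there is nothing to prove. For every $n\neq r$ in the range the sum vanishes. For $n=r$ the sum equals $r\beta^r$, and the coefficient is $n-1=r-1$. The hypothesis $p\mid(r-1)$ makes $r-1$ vanish in $\mathbb{F}_q$, so this term is $0$ as well. Note that $r\equiv1\pmod p$, so $r\beta^r\neq0$; hence the hypothesis is genuinely what kills the bad term, not the sum itself.

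With all moment identities \eqref{eq:criterion1} verified, Theorem~\ref{thm:criterion} gives reverse self-duality. The main point needing care is the bookkeeping of the index shift from $\alpha^{m}$ to $\alpha^{m+2}$, which comes from the factor $\alpha^2$ in $v_{2,\alpha}$. That factor moves the single surviving power sum to exponent $n=r$, where the coefficient $n-1=r-1$ is exactly the quantity the hypothesis $p\mid(r-1)$ controls. The choice of $u_\alpha$ is irrelevant because it cancels in $\lambda_\alpha$.
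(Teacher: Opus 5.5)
Your proof is correct and has the same outline as the paper's. Both reduce via Theorem~\ref{thm:criterion} to $(m+1)\mu\sum_{\alpha\in\mathcal A}\alpha^{m+2}=0$, observe that the only nonvanishing coset power sum in range occurs at $m=r-2$ (your $n=r$), and use $p\mid(r-1)$ to kill that term. The one difference is that you evaluate the power sums directly with the geometric-sum/character argument, whereas the paper gets the same vanishing pattern from Lemma~\ref{lem:moment} applied to $H(x)=x^r-\beta^r$, writing $\sum_{\alpha}\alpha^{m+2}=r\beta^r S_{m+1}$; your route is a shorter, self-contained version of the same computation.
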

\begin{proof}
The vanishing polynomial of $\mathcal{A}$ is $H(x)=\prod_{\alpha\in \mathcal{A}}(x-\alpha)=x^r-\beta^r$.
Hence, $H'(x)=r x^{r-1}$.
Since $r\mid(q-1)$, we have $p\nmid r$, and therefore $r\ne 0$ in
$\mathbb F_q$. For every $\alpha\in \mathcal{A}$, one has
$\alpha^r=\beta^r$, and thus $H'(\alpha)=r\alpha^{r-1}
        =\frac{r\beta^r}{\alpha}$.
Consequently,
 \[
        \sum_{\alpha\in \mathcal{A}}\alpha^{m+2}
        =
        r\beta^r\sum_{\alpha\in \mathcal{A}}\frac{\alpha^{m+1}}{H'(\alpha)}=r\beta^rS_{m+1}.
\]
By Lemma~\ref{lem:moment}, applied to
$H(x)=x^r-\beta^r$, we have
$$S_m = \begin{cases} 0, & 0 \le m \le r-2, \\ 1, & m = r-1. \end{cases}$$
Moreover, for every $m\ge r$,
\[
        S_m-\beta^r S_{m-r}=0.
\]
It follows that
\[
        S_m=0,\qquad m\in \{0,1,\ldots,r-2\}\cup\{r,r+1,\ldots,2r-2\},
\]
and the only possibly nonzero value among $0\le m\le 2r-2$ is
$   S_{r-1}=1.$
Then
\[
       (m+1)\sum_{\alpha\in \mathcal{A}}\lambda_{\alpha}\alpha^m=(m+1)\sum_{\alpha\in \mathcal{A}}v_{1,\alpha}v_{2,\alpha}\alpha^m =(m+1)\mu\sum_{\alpha\in \mathcal{A}}\alpha^{m+2}=(m+1)\mu r\beta^r S_{m+1}.
\]
The preceding identities show that this expression is zero for $0 \leq m \leq 2r-3$ and $m \neq r-2$. When $m= r-2$, then $m+1=r-1=0$
in $\mathbb F_q$. Hence, the desired identity
also holds in this exceptional case.

Therefore, the code is reverse
self-dual by the moment criterion in
Theorem~\ref{thm:criterion}.
\end{proof}

We now present a construction of a reverse self-dual HRS code by applying the multiplicative subgroup coset construction described in Theorem~\ref{thm:mult-single}.
\begin{example}
Let $\mathbb{F}_4 = \{0, 1, \omega, \omega^2\}$, where $\omega^2 + \omega + 1 = 0$. Let $G = \mathbb{F}_4^* = \{1, \omega, \omega^2\}$ be the cyclic multiplicative subgroup of order $r = 3$. Then, it is obvious that $p \mid (r - 1)$. Take $\beta = 1 \in \mathbb{F}_4^*$, then
$\mathcal{A} = \beta G = \{\alpha_1 = 1, \; \alpha_2 = \omega, \; \alpha_3 = \omega^2\}$.
Choose $\mu = 1 \in \mathbb{F}_4^*$ and assign the local nonzero scalars
$u_\alpha\in\mathbb F_4^*$ for $\alpha\in\mathcal A$ by
$(u_1,u_\omega,u_{\omega^2})=(1,1,\omega).$ Then the multiplier matrix is
$$V = \begin{pmatrix} v_{1, \alpha} \\ v_{2, \alpha} \end{pmatrix}_{\alpha \in \mathcal{A}} = \begin{bmatrix} 1 & 1 & \omega \\ 1 & \omega^2 & 1 \end{bmatrix}.$$
The total code length is $N = rs = 3 \times 2 = 6$. Then the HRS code $C = \operatorname{HRS}_3(\mathcal{A}, V, 2, 3)$ is  spanned by 
$$M(1) = \begin{bmatrix} 1 & 1 & \omega \\ 0 & 0 & 0 \end{bmatrix}, \quad M(x) = \begin{bmatrix} 1 & \omega & 1 \\ 1 & \omega^2 & 1 \end{bmatrix},\quad M(x^2) = \begin{bmatrix} 1 & \omega^2 & \omega^2 \\ 0 & 0 & 0 \end{bmatrix}.$$
Next, we verify the conclusion of Theorem~\ref{thm:mult-single} directly. Let $G_a=M(x^a)$ for $0\le a\le 2$. Then $C=\operatorname{span}_{\mathbb F_{4}}\{G_0,G_1,G_2\}$, and its row-reversal space is
\[\operatorname{Rev}(C) = \operatorname{span}_{\mathbb F_{4}}\{\operatorname{Rev}(G_0),\operatorname{Rev}(G_1),\operatorname{Rev}(G_2)\}.\] 
A direct computation of the Euclidean inner product gives 
\[
\langle G_a, \operatorname{Rev}(G_b) \rangle = 0 \quad \text{for all } 0 \le a, b \le 2.
\]
Thus, $\operatorname{Rev}(C)\subseteq C^\perp$. Since $\dim \operatorname{Rev}(C)=\dim C=3$ and $\dim C^\perp = 6-3=3$, we must have $C^\perp=\operatorname{Rev}(C)$, or equivalently $C=C^{\perp_{\rm rev}}$. This confirms that $C$ is reverse self-dual as asserted.
\end{example}

\begin{theorem}[Multiplicative subgroup with zero for $s=2$]\label{thm:mult-s=2}
Let $G\le \mathbb F_q^*$ be a multiplicative subgroup of order $r-1$,
and let $\beta\in\mathbb F_q^*$. Let
$   \mathcal{A}=\{0\}\cup \beta G,$
so that $|\mathcal{A}|=r$. Assume that
\[
        p=\operatorname{char}(\mathbb F_q)\mid r .
\]
Choose $\lambda\in\mathbb F_q^*$ and arbitrary nonzero scalars
$u_\alpha\in\mathbb F_q^*$, $\alpha\in \mathcal{A}$. Define the multiplier matrix
	\[
	V=\begin{pmatrix} v_{1,\alpha} \\ v_{2,\alpha}\end{pmatrix}_{\alpha\in \mathcal{A}}=\begin{pmatrix} u_{\alpha} \\ \lambda u_{\alpha}^{-1}\end{pmatrix}_{\alpha\in \mathcal{A}}\in (\mathbb{F}^*_q)^{2\times r}.
	\]
	Then the HRS code $C=\operatorname{HRS}_r(\mathcal{A},V, 2,r)$ is reverse self-dual.
\end{theorem}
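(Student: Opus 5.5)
Since $\lambda_\alpha=v_{1,\alpha}v_{2,\alpha}=\lambda$ is constant, Theorem~\ref{thm:criterion} reduces the claim to showing
\[
(m+1)\sum_{\alpha\in\mathcal A}\alpha^m=0,\qquad 0\le m\le 2r-3 .
\]
The plan is to compute these power sums directly from the structure $\mathcal A=\{0\}\cup\beta G$, where $|G|=r-1$. This avoids the recurrence of Lemma~\ref{lem:moment}, although one could also run it with $H(x)=x(x^{r-1}-\beta^{r-1})$.

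For $m=0$ the sum is $|\mathcal A|=r$. This vanishes in $\mathbb F_q$ because $p\mid r$, so the term $m=0$ is fine. (With the convention $0^0=1$, the point $0$ contributes $1$ to this sum.)

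For $m\ge1$ the point $0$ contributes nothing. Hence
\[
\sum_{\alpha\in\mathcal A}\alpha^m=\beta^m\sum_{g\in G}g^m .
\]
Because $G$ is cyclic of order $r-1$, the standard character-sum fact gives
\[
\sum_{g\in G}g^m=\begin{cases} r-1, & (r-1)\mid m,\\ 0, & \text{otherwise}.\end{cases}
\]
In the range $1\le m\le 2r-3$, the only multiple of $r-1$ is $m=r-1$. (For $r\ge3$ we have $2(r-1)=2r-2>2r-3$; if $r=2$ the range $1\le m\le1$ again contains only $m=r-1=1$.) So every term vanishes except possibly $m=r-1$.

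For $m=r-1$ the coefficient is $m+1=r$, which is $0$ in $\mathbb F_q$ since $p\mid r$. Thus that term vanishes as well, even though the power sum itself equals $(r-1)\beta^{r-1}=-\beta^{r-1}\neq0$.

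All moment conditions of Theorem~\ref{thm:criterion} therefore hold, and $C$ is reverse self-dual. The only delicate points are the exceptional indices $m=0$ and $m=r-1$, and both are absorbed by the hypothesis $p\mid r$. One should also note that $r-1\mid q-1$, so $p\nmid r-1$; this is consistent with $p\mid r$ and causes no conflict.
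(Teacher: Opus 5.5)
Your proof is correct. It shares the paper's reduction to Theorem~\ref{thm:criterion} and differs only in how you evaluate the power sums $\sum_{\alpha\in\mathcal A}\alpha^m$.

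The paper takes the vanishing polynomial $H(x)=x^r-\beta^{r-1}x$. It notes that $p\mid r$ makes $H'(x)=-\beta^{r-1}$ constant, so each plain power sum equals $-\beta^{r-1}S_m$. It then gets $S_m=0$ for all $m\ne r-1$ in the range from Lemma~\ref{lem:moment}: Lagrange interpolation covers $m\le r-2$, and the recurrence $S_m=\beta^{r-1}S_{m-r+1}$ covers $r\le m\le 2r-3$.

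You instead apply the orthogonality relation for the cyclic group $G$ directly, and treat $m=0$ separately via $|\mathcal A|=r=0$.

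Your route is more elementary and self-contained. It gives the exact value of every power sum, which shows that the sum at $m=r-1$ is $-\beta^{r-1}\neq 0$, so the vanishing there really comes from the coefficient $m+1=r$. It also makes explicit the convention $0^0=1$ that the criterion needs at $m=0$.

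The paper's route is more uniform. The same Lemma~\ref{lem:moment} argument also handles the additive-coset and multiplicative-coset constructions. There the hypothesis $p\mid r$ enters through the constancy of $H'$, which covers $m=0$ automatically, rather than through the count $|\mathcal A|$.
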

\begin{proof}
Since $\mathcal{A}=\{0\}\cup \beta G$ and $|G|=r-1$, the vanishing polynomial of $\mathcal{A}$ is $H(x) = x(x^{r-1}-\beta^{r-1}) = x^{r}-\beta^{r-1}x$. Since $r=0$ in $\mathbb{F}_q$, the derivative $H'(x) = -\beta^{r-1}$ is a nonzero constant.

For $m\ge 0$, define the weighted moments
$   S_m:=\sum_{\alpha\in  \mathcal{A}}\frac{\alpha^m}{H'(\alpha)}.$
By Lemma~\ref{lem:moment}, 
we have
$$S_m = \begin{cases} 0, & 0 \le m \le r-2, \\ 1, & m = r-1. \end{cases}$$
and
\[
        S_m-\beta^{r-1} S_{m-r+1}=0,\qquad m\ge r.
\]
Thus, for $r\le m\le 2r-3,$
we have $1\le m-r+1\le r-2,$
and hence
\[
        S_m=0,\qquad r\le m\le 2r-3.
\]
Since $H'(\alpha)=-\beta^{r-1}$ for all $\alpha\in \mathcal{A}$, we have 
\[\sum_{\alpha\in  \mathcal{A}}\alpha^m
        =
        -\beta^{r-1} S_m =0\]
for all $0\le m\le 2r-3$ with $m \neq r-1$. Moreover, when $m=r-1$, then $m+1=r=0$
in $\mathbb F_q$. Hence
\[
        (m+1)\sum_{\alpha\in  \mathcal{A}}\lambda\alpha^m=0,
        \qquad 0\le m\le 2r-3.
\]
Since $v_{1,\alpha}v_{2,\alpha}=\lambda$ for all $\alpha\in  \mathcal{A}$,
Theorem~\ref{thm:criterion} implies that $C$ is reverse self-dual.
\end{proof}

\begin{example}
Let $\mathbb{F}_4 = \{0, 1, \omega, \omega^2\}$, where $\omega^2 + \omega + 1 = 0$. Let $G = \{1\} \subseteq \mathbb{F}_4^*$ be the trivial multiplicative subgroup of order $|G| = r - 1 = 1$. This choice satisfies the divisibility condition $p \mid r$.
Choose $\beta=1\in\mathbb F_4^*$. Then $\mathcal A=\{0\}\cup \beta G=\{0,1\}$. Let $\lambda = \omega \in \mathbb{F}_4^*$ and  $(u_0, u_1) = (1, \omega^2)$. Then the multiplier matrix  $V $ is
$$V =\begin{pmatrix} u_\alpha \\ \lambda u_\alpha^{-1} \end{pmatrix}_{\alpha \in \mathcal{A}} = \begin{bmatrix} 1 & \omega^2 \\ \omega & \omega^2 \end{bmatrix}.$$
Setting the dimension of the code to $k = 2$, the total length of the code is $N = rs = 2 \times 2 = 4$. Then the HRS code $C = \operatorname{HRS}_2(\mathcal{A}, V, 2, 2)$ is spanned by 
$$M(1) = \begin{bmatrix} 1 &  \omega^2 \\ 0 & 0  \end{bmatrix}, \quad M(x) = \begin{bmatrix}0 & \omega^2  \\ \omega & \omega^2  \end{bmatrix}.$$
Denote $G_a = M(x^a)$ for $0 \le a \le 1$, so that $C = \operatorname{span}_{\mathbb{F}_4}\{G_0, G_1\}$. Then
\[
\operatorname{Rev}(C) = \operatorname{span}_{\mathbb{F}_4}\{\operatorname{Rev}(G_0), \operatorname{Rev}(G_1)\}.
\]
A direct computation shows that $\langle G_a, \operatorname{Rev}(G_b) \rangle = 0$ for all $0 \le a, b \le 1$. Thus $\operatorname{Rev}(C) \subseteq C^\perp$. Since $\dim \operatorname{Rev}(C) = 2 = \dim C^\perp$, we obtain $C^\perp = \operatorname{Rev}(C)$,
which confirms that $C = C^{\perp_{\rm rev}}$ in agreement with Theorem~\ref{thm:mult-s=2}.
\end{example}

\subsection{The General Multiplicity Case}
We now extend the preceding discussion to arbitrary multiplicity $s$.
The main difference is that the reverse bilinear form involves all pairs
of Hasse derivative orders whose sum is $s-1$. Under a natural symmetry
condition on the multipliers, the criterion again reduces to weighted
power-sum identities.

We first introduce the well-known Lucas' Theorem. Let $p$ be a prime. For nonnegative integers $n$ and $k$, write their base-$p$ expansions as
\[
 n=n_0+n_1p+\cdots+n_Lp^L, \qquad
 k=k_0+k_1p+\cdots+k_Lp^L,
\]
where $0\le n_i,k_i\le p-1$ for all $i$. We allow leading zeros so that both expansions have the same length.

\begin{theorem}[Lucas' Theorem \cite{Lucas78}]
With the notation above, one has
\[
 \binom{n}{k}\equiv \prod_{i=0}^L \binom{n_i}{k_i} \pmod p.
\]
Here we use the convention that $\binom{a}{b}=0$ if $b>a$.
\end{theorem}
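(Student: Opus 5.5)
The plan is the standard generating-function argument in the polynomial ring $\mathbb{F}_p[x]$, where binomial coefficients are read modulo $p$ exactly as in the definition of the Hasse derivative. The first step is the Frobenius identity $(1+x)^p = 1+x^p$ in $\mathbb{F}_p[x]$. It holds because $p \mid \binom{p}{j}$ for $0<j<p$: the prime $p$ divides the numerator $p!$ but not $j!(p-j)!$. By induction on $i$ this gives $(1+x)^{p^i} = 1+x^{p^i}$ for all $i\ge 0$.

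Next, I write $n=\sum_{i=0}^L n_ip^i$ and factor
\[
(1+x)^n=\prod_{i=0}^{L}\bigl((1+x)^{p^i}\bigr)^{n_i}=\prod_{i=0}^{L}\bigl(1+x^{p^i}\bigr)^{n_i}
=\prod_{i=0}^{L}\sum_{j_i=0}^{n_i}\binom{n_i}{j_i}x^{j_ip^i}.
\]
Expanding the product, every monomial has the form $x^{\sum_i j_ip^i}$ with $0\le j_i\le n_i\le p-1$. By uniqueness of base-$p$ expansions, distinct tuples $(j_0,\dots,j_L)$ give distinct exponents, so there is no collecting of terms.

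Comparing the coefficient of $x^k$ on both sides finishes the argument. On the left the coefficient is $\binom{n}{k} \bmod p$. On the right, if $k_i\le n_i$ for all $i$, the unique contributing tuple is $j_i=k_i$, with coefficient $\prod_i\binom{n_i}{k_i}$. If instead some $k_i>n_i$, then $x^k$ does not occur, so the coefficient is $0$. This agrees with the product, since $\binom{n_i}{k_i}=0$ by the stated convention. The case $k>n$ is also covered: it is either handled by padding both expansions to a common length $L$, or $k$ exceeds the maximal exponent $n$, so both sides vanish.

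There is no real obstacle. The only point needing care is the bookkeeping in the coefficient comparison: the constraint $n_i\le p-1$ is precisely what guarantees that exponents $\sum_i j_ip^i$ are base-$p$ representations, and hence are unique.
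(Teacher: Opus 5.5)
The paper does not prove this statement: it quotes Lucas' theorem as a classical result with a citation to Lucas (1878), and then uses it as a black box in the proofs of Theorems~\ref{thm:additive-p-power} and~\ref{mul--pullback-general}.

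Your argument is the standard self-contained proof via the Frobenius identity $(1+x)^{p^i}=1+x^{p^i}$ in $\mathbb{F}_p[x]$, and it is correct. The steps that need checking all hold:
\begin{itemize}
\item the coefficient of $x^k$ in $(1+x)^n\in\mathbb{F}_p[x]$ is the reduction of the integer $\binom{n}{k}$;
\item the bound $0\le j_i\le n_i\le p-1$ makes each exponent $\sum_i j_ip^i$ a genuine base-$p$ expansion, so distinct tuples never collide;
\item a digit with $k_i>n_i$ leaves no admissible tuple, which matches the convention $\binom{n_i}{k_i}=0$.
\end{itemize}
Your separate treatment of $k>n$ is unnecessary. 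Once both expansions are padded to a common length, as the paper stipulates, some digit has $k_i>n_i$, so the general argument already covers this case.

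The citation buys the paper brevity. Your proof buys self-containedness at the cost of a few lines. It also makes transparent the two consequences the paper actually uses:
\begin{itemize}
\item for $s=2^u$, $\binom{m+s-1}{s-1}$ is odd exactly when the $u$ lowest binary digits of $m+s-1$ are all $1$;
\item $\binom{hr_0-1}{s-1}\equiv\binom{h-1}{s-1}\pmod p$ whenever $s-1<p$ and $r_0\equiv 1\pmod p$.
\end{itemize}
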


	Before stating the general criterion, let $k=rs/2$ and define
	\[
	c^{(a)}=\operatorname{Ev}_{\mathcal A,V}(x^a),\qquad 0\le a\le k-1.
	\] 
	\begin{theorem}\label{thm:general_criterion}
		Suppose $rs$ is even, and let
		$\mathcal{A}=\{\alpha_1,\ldots,\alpha_r\}\subseteq\mathbb{F}_q$
		be a set of $r$ pairwise distinct evaluation points. Let
		$V=(v_{i,j})\in(\mathbb F_q^*)^{s\times r}$ be a multiplier matrix.
		Assume that there exist $\lambda_1,\ldots,\lambda_r\in\mathbb F_q^*$ such that
		\[
		v_{t+1,j}v_{s-t,j}=\lambda_j,
		\qquad
		0\le t\le s-1,
		\quad 1\le j\le r.
		\]
		Then the HRS code
	$	C=\operatorname{HRS}_{\frac{rs}{2}}(\mathcal{A},V,s,r)$
		is reverse self-dual if and only if
		\begin{equation}\label{eq:general_criterion}
			\binom{m+s-1}{s-1}
			\sum_{j=1}^r \lambda_j\alpha_j^m=0,
			\qquad 0\le m\le rs-s-1,
		\end{equation}
		where the binomial coefficient is viewed as an element of $\mathbb F_q$.
	\end{theorem}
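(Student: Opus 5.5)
We follow the proof of Theorem~\ref{thm:criterion}. Since $\dim C=rs/2=\dim C^{\perp_{\rm rev}}$, $C$ is reverse self-dual if and only if it is reverse self-orthogonal. Because the codewords $c^{(a)}$, $0\le a\le k-1$ with $k=rs/2$, span $C$, this is equivalent to
\[
[c^{(a)},c^{(b)}]_{\rm rev}=0,\qquad 0\le a,b\le k-1 .
\]

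First we compute this pairing. With $i=t+1$, the entry $c^{(a)}_{t+1,j}=v_{t+1,j}\partial^{(t)}x^a(\alpha_j)$ is paired with $c^{(b)}_{s-t,j}=v_{s-t,j}\partial^{(s-1-t)}x^b(\alpha_j)$. The symmetry hypothesis turns the product of multipliers into $\lambda_j$, independent of $t$. Hence
\[
[c^{(a)},c^{(b)}]_{\rm rev}=\sum_{j=1}^r\lambda_j\sum_{t=0}^{s-1}\partial^{(t)}x^a(\alpha_j)\,\partial^{(s-1-t)}x^b(\alpha_j).
\]
By the product rule (Lemma~\ref{product-rule}) with two factors, the inner sum is $\partial^{(s-1)}(x^{a+b})(\alpha_j)=\binom{a+b}{s-1}\alpha_j^{a+b-s+1}$, which vanishes when $a+b<s-1$. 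Writing $n=a+b$, the pairing therefore depends only on $n$:
\[
T_n:=\binom{n}{s-1}\sum_{j=1}^r\lambda_j\alpha_j^{\,n-s+1}.
\]

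Next we reindex. Put $m=n-s+1$, so that $\binom{n}{s-1}=\binom{m+s-1}{s-1}$. As $(a,b)$ ranges over $[0,k-1]^2$, $n$ takes every value in $[0,2k-2]=[0,rs-2]$, and each value is attained. The values $n<s-1$ give $T_n=0$ automatically, so they impose no condition. The remaining values $s-1\le n\le rs-2$ correspond exactly to $0\le m\le rs-s-1$. Therefore reverse self-orthogonality holds if and only if every $T_n$ in that range vanishes, which is precisely \eqref{eq:general_criterion}. This gives both directions at once. For necessity, pick any $(a,b)$ with $a+b=m+s-1$, which is possible since $m+s-1\le 2k-2$.

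Lucas' Theorem is not needed for the criterion itself; the coefficient $\binom{m+s-1}{s-1}$ is simply kept as an element of $\mathbb F_q$. There is no real obstacle in this argument. The only points requiring care are getting the index pairing $t+1\leftrightarrow s-t$ right, so that the symmetry hypothesis applies exactly, and confirming the range of $m$. The edge case $s-1>rs-2$ cannot occur: it would force $r=1$, and then $rs$ even and $k\ge 1$ still keep the range consistent.
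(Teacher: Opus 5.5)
Your proof is correct and follows the paper's argument essentially step for step: the dimension count reduces reverse self-duality to reverse self-orthogonality, the product rule gives $[c^{(a)},c^{(b)}]_{\rm rev}=\binom{a+b}{s-1}\sum_j\lambda_j\alpha_j^{a+b-s+1}$, you reindex with $m=a+b-s+1$, and for necessity you pick $a+b=m+s-1\le 2k-2$. One small slip: the case $s-1>rs-2$ does occur, exactly when $r=1$ (so $s$ is even and $k=s/2$); there every $a+b\le s-2$, all pairings vanish, and the range of $m$ is empty, so the statement holds vacuously and your argument already covers it.
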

	
	\begin{proof}
		Set $k=rs/2$. Since $\dim C=k$ and $\dim C^{\perp_{\rm rev}}=rs-k=k$, the code $C$ is reverse self-dual if and only if it is reverse self-orthogonal. Thus it suffices to test the reverse inner products of the monomial evaluation vectors $c^{(a)}$, $0\le a\le k-1$.
		
		For $0\le a,b\le k-1$, the symmetric multiplier condition and the product rule for Hasse derivatives give
		\begin{align*}
			[c^{(a)},c^{(b)}]_{\rm rev}
			&=\sum_{j=1}^r\sum_{i=1}^s
			v_{i,j}v_{s-i+1,j}
			\partial^{(i-1)}(x^a)(\alpha_j)
			\partial^{(s-i)}(x^b)(\alpha_j)\\
			&=\sum_{j=1}^r\lambda_j
			\sum_{t=0}^{s-1}
			\partial^{(t)}(x^a)(\alpha_j)
			\partial^{(s-1-t)}(x^b)(\alpha_j)\\
			&=\sum_{j=1}^r\lambda_j
			\partial^{(s-1)}(x^{a+b})(\alpha_j).
		\end{align*}
		This formula is defined even when some $\alpha_j=0$. If $a+b<s-1$, then $\partial^{(s-1)}(x^{a+b})=0$, so $[c^{(a)},c^{(b)}]_{\rm rev}=0$. If $a+b\ge s-1$, then
		\[
		[c^{(a)},c^{(b)}]_{\rm rev}
		=
		\binom{a+b}{s-1}
		\sum_{j=1}^r\lambda_j\alpha_j^{a+b-s+1}.
		\]
		Putting $m=a+b-s+1$ gives
		\[
		[c^{(a)},c^{(b)}]_{\rm rev}
		=
		\binom{m+s-1}{s-1}
		\sum_{j=1}^r\lambda_j\alpha_j^m,
		\]
		where $0\le m\le 2k-s-1=rs-s-1$.
		
		Therefore, the moment identities in \eqref{eq:general_criterion} imply that all reverse inner products vanish, and hence $C$ is reverse self-dual. Conversely, assume that $C$ is reverse self-dual, hence reverse self-orthogonal. For each $m$ with $0\le m\le rs-s-1$, set $n=m+s-1$. Then $s-1\le n\le 2k-2$, so there exist $a,b$ with $0\le a,b\le k-1$ and $a+b=n$; for instance, take $a=\min\{n,k-1\}$ and $b=n-a$. Applying reverse self-orthogonality to this pair gives exactly the identity in \eqref{eq:general_criterion}. This proves the equivalence.
	\end{proof}

\begin{theorem}[Additive affine construction for $s=2^u$]\label{thm:additive-p-power}
Suppose $q$ is a power of $2$. Let $U\le \mathbb{F}_q$ be an $\mathbb{F}_2$-linear subspace of dimension $d$, and let
\[
  \mathcal{A}=a+U\subseteq \mathbb{F}_q,
  \qquad r=|\mathcal{A}|=2^d.
\]
Let $s=2^u$ with $u\ge 1$. 
 Choose $\lambda\in\mathbb{F}_q^*$. Define the multiplier matrix $V$ with
 \[
  v_{i,\alpha}v_{s-i+1,\alpha}=\lambda,
  \qquad 1\le i\le s,
  \quad \alpha\in \mathcal{A}.
\]
 	Then the HRS code $C=\operatorname{HRS}_{\frac{rs}{2}}(\mathcal{A},V, s,r)$ is reverse self-dual.
\end{theorem}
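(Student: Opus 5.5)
The plan is to apply Theorem~\ref{thm:general_criterion} with the constant weights $\lambda_j=\lambda$. Its hypotheses hold: $s=2^u$ with $u\ge1$ is even, so $rs$ is even, and the multiplier condition $v_{t+1,\alpha}v_{s-t,\alpha}=\lambda$ is exactly what the statement imposes. So it suffices to check that
\[
\binom{m+s-1}{s-1}\sum_{\alpha\in\mathcal A}\alpha^m=0,\qquad 0\le m\le rs-s-1 .
\]
I will split the range of $m$ according to whether the binomial coefficient vanishes in characteristic $2$.

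\emph{Step 1 (Lucas).} Since $s-1=2^u-1$, its binary digits in positions $0,\dots,u-1$ are all $1$ and the rest are $0$. By Lucas' Theorem, $\binom{m+s-1}{s-1}\not\equiv0\pmod 2$ exactly when the lowest $u$ binary digits of $m+s-1$ are all $1$. This means $m+s-1\equiv s-1\pmod{2^u}$, i.e. $s\mid m$. Hence only the values $m=sn$ need attention. The constraint $sn\le rs-s-1$ forces $0\le n\le r-2$.

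\emph{Step 2 (Frobenius).} In characteristic $2$ the map $x\mapsto x^{s}$ is additive, so
\[
\sum_{\alpha\in\mathcal A}\alpha^{sn}=\Bigl(\sum_{\alpha\in\mathcal A}\alpha^{n}\Bigr)^{s}.
\]
It therefore remains to show that $\sum_{\alpha\in\mathcal A}\alpha^n=0$ for $0\le n\le r-2$. When $d=0$ (so $r=1$) this range is empty and there is nothing to prove.

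\emph{Step 3 (power sums over an affine subspace).} Let $L(x)=\prod_{u'\in U}(x-u')$. This is an $\mathbb F_2$-linearized polynomial $L(x)=\sum_i c_i x^{2^i}$. The vanishing polynomial of $\mathcal A=a+U$ is
\[
H(x)=L(x-a)=L(x)-L(a),
\]
so $H'(x)=c_0$ is a constant. Since $H$ has $r$ distinct roots, $H'(\alpha)\neq0$ at each root, hence $c_0\neq0$. Lemma~\ref{lem:moment} then gives $c_0^{-1}\sum_{\alpha}\alpha^n=S_n=0$ for $0\le n\le r-2$. This is the same argument as in the proof of Theorem~\ref{thm:additive-single}, but for a general $\mathbb F_2$-subspace rather than a scaled subfield.

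Combining the three steps, every instance of \eqref{eq:general_criterion} holds. When $s\nmid m$ the binomial factor vanishes, and when $s\mid m$ the power sum vanishes. Theorem~\ref{thm:general_criterion} then gives reverse self-duality. The main point needing care is Step 1: the Lucas analysis must show that the nonzero binomial coefficients occur exactly at the multiples of $s$. That is what lets the Frobenius reduction in Step 2 shrink the range $0\le m\le rs-s-1$ down to the short range $0\le n\le r-2$ covered by Lemma~\ref{lem:moment}.
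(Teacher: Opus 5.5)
Your proposal is correct and follows essentially the same route as the paper. The paper likewise reduces to the moment criterion of Theorem~\ref{thm:general_criterion}, uses Lucas' theorem to restrict to $m=\ell s$ with $\ell\le r-2$, and then kills $\sum_{\alpha}\alpha^{\ell s}=\bigl(\sum_\alpha\alpha^\ell\bigr)^{s}$ via Frobenius and Lemma~\ref{lem:moment} with constant $H'$. Your extra details only make explicit what the paper leaves implicit: the check that $rs$ is even, the argument that $H'=c_0\neq 0$ from the linearized vanishing polynomial, and the trivial case $d=0$.
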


\begin{proof}
By Theorem~\ref{thm:general_criterion}, it suffices to show that
\[
  \binom{m+s-1}{s-1}\sum_{\alpha\in \mathcal{A}}\alpha^m=0,
  \qquad 0\le m\le rs-s-1.
\]
If the binomial coefficient is zero in $\mathbb{F}_q$, the claim is immediate. Otherwise, since $s=2^u$, we have
\[
  s-1=2^u-1=1+2+\cdots+2^{u-1}.
\]
Lucas' theorem implies that $\binom{m+s-1}{s-1}\ne0$ only if the $u$ least significant binary digits of $m+s-1$ are all $1$, or equivalently
\[
  m\equiv 0\pmod {2^u}.
\]
Thus we only need to consider that $m=\ell s$ for some integer $\ell\ge0$. For an affine $\mathbb{F}_2$-subspace $\mathcal{A}$ of size $r$, one has
\[
  \sum_{\alpha\in \mathcal{A}}\alpha^j=0,
  \qquad 0\le j\le r-2.
\]
Indeed, if $H(x)=\prod_{\alpha\in \mathcal{A}}(x-\alpha)$, then $H'(x)$ is a nonzero constant, and the claim follows by Lemma \ref{lem:moment}. Since $\ell=\frac{m}{s}\le \frac{rs-s-1}{s}=r-1-\frac{1}{s}$, 
we have $\ell\le r-2$. Therefore, using the Frobenius endomorphism,
\[
  \sum_{\alpha\in \mathcal{A}}\alpha^m
  =\sum_{\alpha\in \mathcal{A}}\alpha^{\ell s}
  =\sum_{\alpha\in \mathcal{A}}(\alpha^\ell)^{2^u}
  =\left(\sum_{\alpha\in \mathcal{A}}\alpha^\ell\right)^{2^u}
  =0.
\]
The criterion in Theorem~\ref{thm:general_criterion} now gives $C=C^{\perp_{\rm rev}}$.
\end{proof}

\begin{example}
Consider an HRS code over $\mathbb{F}_8$ with multiplicity $s = 4$, dimension $k = 4$ and $r = 2$ evaluation points. Let $U = \mathbb{F}_2= \{0, 1\} \leq \mathbb{F}_8$. We define the evaluation set as the identity affine shift $\mathcal{A} = 0 + U = \{ \alpha_b := b \mid b \in U \}=\{0, 1\}$. 
Choose $\lambda=1$ and take the multiplier matrix $V$ to be the all-one matrix. 
Then the HRS code $C = \operatorname{HRS}_4(\mathcal{A}, V, 4, 2)$ is generated by
$$M(1) = \begin{bmatrix} 1 & 1 \\ 0 & 0 \\ 0 & 0 \\ 0 & 0 \end{bmatrix}, \quad M(x) = \begin{bmatrix} 0 & 1 \\ 1 & 1 \\ 0 & 0 \\ 0 & 0 \end{bmatrix},\quad 
M(x^2) = \begin{bmatrix} 0 & 1 \\ 0 & 0 \\ 1 & 1 \\ 0 & 0 \end{bmatrix}, \quad M(x^3) = \begin{bmatrix} 0 & 1 \\ 0 & 1 \\ 0 & 1 \\ 1 & 1 \end{bmatrix}.$$
We verify the conclusion directly. Denote $G_a = M(x^a)$ for $0 \le a \le 3$, then 
\[
\operatorname{Rev}(C) = \operatorname{span}_{\mathbb{F}_8}\{\operatorname{Rev}(G_0), \operatorname{Rev}(G_1), \operatorname{Rev}(G_2), \operatorname{Rev}(G_3)\}.
\]
A direct computation shows that  $\langle G_a, \operatorname{Rev}(G_b) \rangle = 0 \quad \text{for all } 0 \le a, b \le 3$.
Thus, $\operatorname{Rev}(C) \subseteq C^\perp$. Then $C^\perp = \operatorname{Rev}(C)$ since $\dim \operatorname{Rev}(C) = \dim C = \dim C^\perp =4$.
Consequently, $C = C^{\perp_{\rm rev}}$, which verifies the reverse self-duality asserted in Theorem~\ref{thm:additive-p-power}.
\end{example}

The following lemma provides the key ingredient for extending
Theorem~\ref{thm:additive-p-power} to broader choices of $s$.

\begin{lemma}[Affine $\mathbb{F}_p$-subspace moment identity]\label{lem:affine-moment}
Let $U\le \mathbb{F}_q$ be an $\mathbb{F}_p$-linear subspace, and let $  \mathcal{A}=a+U\subseteq\mathbb{F}_q$ be an affine translation of size $r$. If $s=p^u t$ with $u\ge0$ and $1\le t\le p-1$, then for every integer $m$ with $0\le m\le rs-s-1$,
\begin{equation}\label{eq:affine-moment}
  \binom{m+s-1}{s-1}\sum_{\alpha\in \mathcal{A}}\alpha^m=0.
\end{equation}
In particular, this holds for all $2\le s\le p$.
\end{lemma}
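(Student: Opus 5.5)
The plan is to combine Lucas' Theorem, which restricts which $m$ can have a nonzero binomial factor, with a Frobenius reduction and a digit-sum criterion for power sums over affine $\mathbb F_p$-subspaces. The argument follows the pattern of the proof of Theorem~\ref{thm:additive-p-power}. Write $r=p^d$ with $d=\dim_{\mathbb F_p}U$.

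\textbf{Step 1 (Lucas filter).} The base-$p$ expansion of $s-1=tp^u-1$ has digit $p-1$ in positions $0,\dots,u-1$, digit $t-1$ in position $u$, and zeros above. By Lucas' Theorem, $\binom{m+s-1}{s-1}\neq0$ forces every digit of $n=m+s-1$ in positions $<u$ to equal $p-1$. Since $m+s-1\equiv m-1 \pmod{p^u}$, this happens exactly when $p^u\mid m$; write $m=\ell p^u$. Adding $s-1$ to $\ell p^u$ then causes no carry out of the low $u$ positions. Hence the digit of $n$ in position $u$ is $(\ell_0+t-1)\bmod p$, where $\ell_0$ is the last base-$p$ digit of $\ell$. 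Lucas also requires this digit to be at least $t-1$, which forces $\ell_0\le p-t$.

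\textbf{Step 2 (reduce to $\ell$).} By the Frobenius endomorphism, $\sum_{\alpha\in\mathcal A}\alpha^{m}=\bigl(\sum_{\alpha\in\mathcal A}\alpha^{\ell}\bigr)^{p^u}$, so it suffices to show $\sum_{\alpha\in\mathcal A}\alpha^\ell=0$. The range $m\le rs-s-1$ gives
\[
\ell\le t(r-1)-1 .
\]
When $t=1$ this reads $\ell\le r-2$. In that case we finish exactly as in Theorem~\ref{thm:additive-p-power}: the vanishing polynomial $H$ of $\mathcal A$ is $L(x-a)$ with $L$ a separable $\mathbb F_p$-linearized polynomial, so $H'$ is a nonzero constant, and Lemma~\ref{lem:moment} gives $\sum_{\alpha}\alpha^\ell=0$ for $\ell\le r-2$. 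This settles all $s=p^u$, and in particular $s=p$. The range $2\le s\le p$ also has $u=0$ and $t=s$, so it falls under the case $t\ge 2$ treated next; that case is the genuine work.

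\textbf{Step 3 (general $t$; main obstacle).} For $t\ge2$, $\ell$ may exceed $r-2$, so Lemma~\ref{lem:moment} alone does not suffice, and we must use the constraint $\ell_0\le p-t$ from Step 1. The plan is a digit-sum criterion. Fix an $\mathbb F_p$-basis $e_1,\dots,e_d$ of $U$ and expand $(a+\sum_k c_ke_k)^\ell$ by the multinomial theorem, with coefficients evaluated through Lucas. Summing over $c\in\mathbb F_p^d$ kills every term in which some $c_k$ appears with exponent not a positive multiple of $p-1$, since $\sum_{c\in\mathbb F_p}c^j=0$ unless $j>0$ and $(p-1)\mid j$. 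A surviving term therefore needs a sub-exponent $i\preceq\ell$ (digitwise) of base-$p$ digit sum at least $d(p-1)$. So it suffices to show that every admissible $\ell$ has digit sum $<d(p-1)$.

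I would first try to bound the digit sum directly. The inequality $\ell\le tp^d-t-1$ together with $\ell_0\le p-t$ should rule out the extreme cases, for instance $\ell=p^d-1$ with all digits equal to $p-1$, because position $0$ is capped at $p-t$. The delicate part is $\ell\ge p^d$, where a digit $e\le t-1$ appears in position $d$. There I expect to need a sharper count: either the cap on $\ell_0$ combined with the upper bound on $\ell$ forces some lower digit to be small, or one uses the relation $\alpha^{p^d}=$ (an $\mathbb F_p$-linear combination of $\alpha^{p^k}$, $k<d$) $+$ const, coming from $H$, to lower exponents before applying the digit-sum test. If this bookkeeping becomes unwieldy, I would instead verify \eqref{eq:affine-moment} for $2\le s\le p$ separately: there $u=0$ and $\ell=m\le s(r-1)-1$ with $m_0\le p-s$, which is the case singled out in the statement.
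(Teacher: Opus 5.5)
\textbf{Gap: Step 3 is never carried out.} Your Steps 1 and 2 are correct. Lucas forces $m=\ell p^u$ with last digit $\ell_0\le p-t$, Frobenius reduces to $\sum_{\alpha\in\mathcal A}\alpha^\ell$ with $\ell\le t(r-1)-1$, and the case $t=1$ follows.

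The digit-sum criterion in Step 3 is also valid. Write $\sigma(\ell)$ for the base-$p$ digit sum. The no-carry condition on multinomial coefficients, together with $\sum_{c\in\mathbb F_p}c^j=0$ unless $j>0$ and $(p-1)\mid j$, gives $\sum_{\alpha\in\mathcal A}\alpha^\ell=0$ whenever $\sigma(\ell)<d(p-1)$.

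But you never prove the inequality the whole case $t\ge 2$ hinges on: $\sigma(\ell)<d(p-1)$ for every $\ell\le t(r-1)-1$ with $\ell_0\le p-t$. You only list strategies you might try, so as written the lemma is established only for $t=1$. Your fallback is not an escape. The range $2\le s\le p-1$ is exactly $u=0$, $t=s\ge 2$. Your Steps 1--2 already reduce every case to $u=0$ with the same admissible exponents, so ``verifying it separately'' is the same unproved Step 3.

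\textbf{How to close it.} The missing count is short and needs no exponent reduction via $H$. Since $\ell<tp^d$ and $t\le p-1$, write $\ell=ep^d+\ell'$ with $0\le \ell'<p^d$ and a single digit $e\le t-1$.
\begin{itemize}
\item If $e\le t-2$, then $\sigma(\ell)\le (t-2)+(d-1)(p-1)+(p-t)=d(p-1)-1$.
\item If $e=t-1$, the bound $\ell\le tp^d-t-1$ gives $\ell'\le p^d-t-1$. Either some digit of $\ell'$ in positions $1,\dots,d-1$ is at most $p-2$, or all of them equal $p-1$. In the latter case $p^d-p+\ell_0\le p^d-t-1$ forces $\ell_0\le p-t-1$. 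Either way $\sigma(\ell)\le d(p-1)-1$.
\end{itemize}
The unique exponent reaching $\sigma(\ell)=d(p-1)$ under your digit constraints is $\ell=tp^d-t$, i.e.\ $m=rs-s$. The hypothesis $m\le rs-s-1$ excludes exactly this one. This settles your worry about $\ell\ge p^d$.

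\textbf{Comparison with the paper.} With this inserted, your proof is complete and genuinely different from the paper's. You argue combinatorially via Lucas, Frobenius and the classical digit-sum vanishing for power sums over $\mathbb F_p$-subspaces. The paper instead applies the residue theorem to $h(x)/H(x)^s$, using the local expansion $H(\alpha+z)^{-s}=\eta^{-s}z^{-s}(1-T^{p^u})^{-t}$ with $T\in z^{p-1}\mathbb F_q[z]$. There the hypothesis $t\le p-1$ enters only through $p^u(p-1)\ge s$. The residue route avoids all digit bookkeeping and yields a stronger identity: $\sum_{\alpha\in\mathcal A}\partial^{(s-1)}h(\alpha)=0$ for every $h$ with $\deg h\le rs-2$. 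Your route is more elementary and makes explicit which exponents actually matter.
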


\begin{proof}
Let $  H(x)=\prod_{\alpha\in \mathcal{A}}(x-\alpha)$
be the vanishing polynomial of $\mathcal{A}$. Let $L_U(y)=\prod_{u\in U}(y-u)$, then $H(x)=L_U(x-a)$. Since $U$ is an $\mathbb{F}_p$-linear subspace,  $L_U$ is a $p$-linearized polynomial. Hence $H'(x)=\eta\in\mathbb{F}_q^*$
is a nonzero constant, and for every $\alpha\in \mathcal{A}$ the local expansion has the form
\[
  H(\alpha+z)=L_U(z)=\eta z-R(z),
  \qquad R(z)\in z^p\mathbb{F}_q[z].
\]
Set
\[
  T(z)=\frac{R(z)}{\eta z}\in z^{p-1}\mathbb{F}_q[z].
\]
Then
\[
  H(\alpha+z)^s=\eta^s z^s(1-T(z))^s.
\]
Note that $
  (1-T)^{-s}=(1-T)^{-p^u t}=(1-T^{p^u})^{-t}=(\sum_{\ell \geq 0}T^{\ell p^u})^t$.
As $T^{p^u}\in z^{p^u(p-1)}\mathbb{F}_q[z]$ and $t\le p-1$, we get
$p^u(p-1)\ge p^u t=s$,
and therefore
\[
  (1-T)^{-s} \in 1+z^s\mathbb{F}_q[[z]],
\]
i.e., $(1-T(z))^{-s}=1+z^sf(z)$ for some $f(z) \in \mathbb{F}_q[[z]]$.
For every polynomial $h\in\mathbb{F}_q[x]$ with $\deg h\le rs-2$, we use
the Hasse--Taylor expansion
\[
  h(\alpha+z)=
  \sum_{j\ge0}\partial^{(j)}h(\alpha)z^j.
\]
Combining this with the expansion of $H(\alpha+z)^{-s}$ gives
\[
  \frac{h(\alpha+z)}{H(\alpha+z)^s}
  =
  \eta^{-s}z^{-s}
  \left(\sum_{j\ge0}\partial^{(j)}h(\alpha)z^j\right)
  (1+z^sf(z))=\eta^{-s}\left(\sum_{j\ge0}\partial^{(j)}h(\alpha)z^j\right)
\big(z^{-s}+f(z)\big).
\]
Therefore,
\[
  \operatorname{Res}_{x=\alpha}\frac{h(x)}{H(x)^s}
  =\eta^{-s}\partial^{(s-1)}h(\alpha).
\]
The finite poles of $h(x)/H(x)^s$ are precisely the points of $\mathcal A$.
Since $\deg H=r$, the denominator $H(x)^s$ has degree $rs$. The assumption
$\deg h\le rs-2$ implies that the residue at infinity is zero. By the residue
theorem on $\mathbb P^1$,
\[
  0=
  \sum_{\alpha\in\mathcal A}
  \operatorname{Res}_{x=\alpha}
  \frac{h(x)}{H(x)^s}
  =
  \eta^{-s}\sum_{\alpha\in\mathcal A}\partial^{(s-1)}h(\alpha).
\]
Since $\eta\ne0$, we obtain $\sum_{\alpha\in\mathcal A}\partial^{(s-1)}h(\alpha)=0$
for every $h\in\mathbb F_q[x]$ with $\deg h\le rs-2$.

Finally, take  $h(x)=x^{m+s-1}$.
The range $0\le m\le rs-s-1$ gives $\deg h=m+s-1\le rs-2,$ so the preceding identity applies. Since $\partial^{(s-1)}x^{m+s-1}
  =
  \binom{m+s-1}{s-1}x^m$,
we get
\[
  \binom{m+s-1}{s-1}
  \sum_{\alpha\in\mathcal A}\alpha^m=0.
\]
This proves \eqref{eq:affine-moment}.
\end{proof}

\begin{theorem}[Affine additive coset construction for $s=p^ut$]\label{thm:affine-additive-construction}
Let $U\le\mathbb{F}_q$ be an $\mathbb{F}_p$-linear subspace and let $\mathcal{A}=a+U\subseteq\mathbb{F}_q$ be an affine translation of size $r$. 
Assume that $s=p^u t$ with $u\ge0, 1\le t\le p-1,$
and that $rs$ is even. Choose $\lambda\in\mathbb{F}_q^*$. If $s$ is odd, assume in addition that $\lambda$ is a square in $\mathbb{F}_q$. Define the multiplier matrix $V$ with
 \[
  v_{i,\alpha}v_{s-i+1,\alpha}=\lambda,
  \qquad 1\le i\le s,
  \quad \alpha\in \mathcal{A}.
\]
Then the HRS code $C=\operatorname{HRS}_{\frac{rs}{2}}(\mathcal{A},V, s,r)$ is reverse self-dual.
\end{theorem}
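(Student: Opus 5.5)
The plan is to reduce the statement directly to Theorem~\ref{thm:general_criterion} and then discharge the resulting moment identities with Lemma~\ref{lem:affine-moment}.

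\textbf{Well-posedness of $V$.} First I check that a multiplier matrix $V\in(\mathbb F_q^*)^{s\times r}$ with $v_{i,\alpha}v_{s-i+1,\alpha}=\lambda$ for all $i$ actually exists.
\begin{itemize}
\item For each $\alpha\in\mathcal A$, choose $v_{i,\alpha}\in\mathbb F_q^*$ freely for $i<(s+1)/2$, and set $v_{s-i+1,\alpha}=\lambda v_{i,\alpha}^{-1}$. This is consistent because the involution $i\mapsto s-i+1$ pairs distinct indices.
\item If $s$ is even, there is no fixed index, so this defines all entries.
\item If $s$ is odd, the fixed index $i=(s+1)/2$ requires $v_{i,\alpha}^2=\lambda$. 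This is exactly where the assumption that $\lambda$ is a square in $\mathbb F_q$ enters: take $v_{(s+1)/2,\alpha}$ to be a fixed square root of $\lambda$, which is nonzero since $\lambda\neq 0$.
\end{itemize}

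\textbf{Verifying the criterion.} Rewriting the condition with $t=i-1$ gives $v_{t+1,\alpha}v_{s-t,\alpha}=\lambda$ for $0\le t\le s-1$. So the symmetry hypothesis of Theorem~\ref{thm:general_criterion} holds with the constant choice $\lambda_j=\lambda$ for every $j$. Since $rs$ is even by assumption, $k=rs/2$ is an integer, and Theorem~\ref{thm:general_criterion} applies. It says $C$ is reverse self-dual if and only if
\[
\lambda\binom{m+s-1}{s-1}\sum_{\alpha\in\mathcal A}\alpha^m=0,\qquad 0\le m\le rs-s-1.
\]
Because $s=p^ut$ with $u\ge 0$ and $1\le t\le p-1$, and $\mathcal A=a+U$ is an affine translate of an $\mathbb F_p$-subspace of size $r$, Lemma~\ref{lem:affine-moment} gives
\[
\binom{m+s-1}{s-1}\sum_{\alpha\in\mathcal A}\alpha^m=0
\]
for every $m$ in this range. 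Multiplying by $\lambda$ yields the criterion, so $C=C^{\perp_{\rm rev}}$.

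\textbf{Main obstacle.} There is no real analytic difficulty: the residue computation has already been done in Lemma~\ref{lem:affine-moment}. The only delicate point is the existence of the middle multiplier when $s$ is odd, which the squareness of $\lambda$ handles. I would also remark that when $s$ is odd the parity condition forces $r$ to be even, hence $p=2$. In that case every element of $\mathbb F_q$ is a square, so the extra hypothesis is automatically satisfied.
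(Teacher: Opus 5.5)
Your proof is correct and follows the paper's argument: you verify the symmetric-multiplier hypothesis of Theorem~\ref{thm:general_criterion} with the constant choice $\lambda_\alpha=\lambda$, then obtain the moment identities from Lemma~\ref{lem:affine-moment}. Your existence check for $V$ is a correct addition that the paper leaves implicit under ``by construction'', as is your observation that for odd $s$ the evenness of $rs$ forces $p=2$, which makes the squareness hypothesis automatic.
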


\begin{proof}
 By construction,
$  v_{i,\alpha}v_{s-i+1,\alpha}=\lambda$
for every $1\le i\le s$ and every $\alpha\in \mathcal{A}$. Thus the symmetric multiplier condition in Theorem~\ref{thm:general_criterion} holds with
$\lambda_\alpha=\lambda$ for all $\alpha\in \mathcal{A}$.

By Lemma~\ref{lem:affine-moment},
\[
  \binom{m+s-1}{s-1}\sum_{\alpha\in \mathcal{A}}\alpha^m=0,
  \qquad 0\le m\le rs-s-1.
\]
Therefore the moment criterion in Theorem~\ref{thm:general_criterion} is satisfied, and hence $C$ is reverse self-dual.
\end{proof}

\begin{example}
   Let $\mathbb{F}_{25}=\mathbb{F}_{5}(\theta)$, where $\theta^2+2=0$. We consider an HRS code over $\mathbb{F}_{25}$ with multiplicity $s = 4$, dimension $k = 10$, and $r = 5$ evaluation points. Let $U =\mathbb{F}_5 = \{0, 1, 2, 3, 4\} \leq \mathbb{F}_{25}$. We define the evaluation set as the affine coset $\mathcal{A} = \theta + U = \{ \alpha_b := \theta + b \mid b \in U \}=\{\theta, \theta+1, \theta+2, \theta+3, \theta+4\}$. 
    Choose $\lambda=2$ and take the multiplier matrix $V$ to be:
    $$V = \begin{bmatrix}1 & 2 & 3 & 4 & 1 \\2 & 1 & 4 & 3 & 2 \\1 & 2 & 3 & 4 & 1 \\2 & 1 & 4 & 3 & 2\end{bmatrix}.$$
    The HRS code $C = \operatorname{HRS}_{10}(\mathcal{A}, V, 4, 5)$ is  generated by the following generator matrices $M(f)$ for $f(x) \in \{1, x,x^2,\cdots,x^9\}$:
$$M(1) = \begin{bmatrix}
1 & 2 & 3 & 4 & 1 \\
0 & 0 & 0 & 0 & 0 \\
0 & 0 & 0 & 0 & 0 \\
0 & 0 & 0 & 0 & 0
\end{bmatrix}, \quad M(x) = \begin{bmatrix}
\theta & 2\theta+2 & 3\theta+1 & 4\theta+2 & \theta+4 \\
2 & 1 & 4 & 3 & 2 \\
0 & 0 & 0 & 0 & 0 \\
0 & 0 & 0 & 0 & 0
\end{bmatrix}.$$

$$M(x^2) = \begin{bmatrix}
3 & 4\theta+3 & 2\theta+1 & 4\theta+3 & 3\theta+4 \\
4\theta & 2\theta+2 & 3\theta+1 & \theta+3 & 4\theta+1 \\
1 & 2 & 3 & 4 & 1 \\
0 & 0 & 0 & 0 & 0
\end{bmatrix}, \quad
M(x^3) = \begin{bmatrix}
3\theta & 2\theta & 3 & 1 & \theta \\
3 & \theta+2 & 3\theta+4 & 4\theta+3 & 3\theta+4 \\
3\theta & \theta+1 & 4\theta+3 & 2\theta+1 & 3\theta+2 \\
2 & 1 & 4 & 3 & 2
\end{bmatrix},$$

$$M(x^4) = \begin{bmatrix}
4 & 2\theta+1 & 3\theta+1 & \theta+3 & 4\theta+3 \\
4\theta & 4\theta & 1 & 3 & 3\theta \\
3 & 4\theta+3 & 2\theta+1 & 4\theta+3 & 3\theta+4 \\
3\theta & 4\theta+4 & \theta+2 & 2\theta+1 & 3\theta+2
\end{bmatrix}, \quad
M(x^5) = \begin{bmatrix}
4\theta & 3\theta+2 & 2\theta+1 & \theta+2 & 4\theta+4 \\
0 & 0 & 0 & 0 & 0 \\
0 & 0 & 0 & 0 & 0 \\
0 & 0 & 0 & 0 & 0
\end{bmatrix},$$

$$M(x^6) = \begin{bmatrix}
2 & 1 & 3 & 4 & 3 \\
3\theta & 4\theta+1 & \theta+3 & 2\theta+4 & 3\theta+3 \\
0 & 0 & 0 & 0 & 0 \\
0 & 0 & 0 & 0 & 0
\end{bmatrix}, \quad
M(x^7) = \begin{bmatrix}
2\theta & \theta+1 & 3\theta+1 & 4\theta+2 & 3\theta+2 \\
3 & 1 & 3 & 1 & 2 \\
4\theta & 3\theta+2 & 2\theta+1 & \theta+2 & 4\theta+4 \\
0 & 0 & 0 & 0 & 0
\end{bmatrix},$$

$$M(x^8) = \begin{bmatrix}
1 & 2\theta+4 & 2\theta+1 & 4\theta+3 & 4\theta+2 \\
2\theta & 4\theta+4 & 2\theta+4 & 4\theta+2 & 3\theta+2 \\
1 & 3 & 4 & 2 & 4 \\
3\theta & 4\theta+1 & \theta+3 & 2\theta+4 & 3\theta+3
\end{bmatrix}, \quad
M(x^9) = \begin{bmatrix}
\theta & \theta & 3 & 1 & 3\theta \\
3 & 4\theta+3 & 4\theta+2 & 2\theta+4 & 2\theta+1 \\
2\theta & \theta+1 & 3\theta+1 & 4\theta+2 & 3\theta+2 \\
1 & 2 & 1 & 2 & 4
\end{bmatrix}.$$
Denote $G_a = M(x^a)$ for $0 \le a \le 9$, then $C = \operatorname{span}_{\mathbb{F}_{25}}\{G_0, \dots, G_9\}$ and $\operatorname{Rev}(C) = \operatorname{span}_{\mathbb{F}_{25}}\{\operatorname{Rev}(G_0), \dots, \operatorname{Rev}(G_9)\}$.
A direct computation shows that
$\langle G_a, \operatorname{Rev}(G_b) \rangle = 0 \quad \text{for all } 0 \le a, b \le 9$.
Thus $\operatorname{Rev}(C) \subseteq C^\perp$. Then $C^\perp = \operatorname{Rev}(C)$ since $\dim \operatorname{Rev}(C) = \dim C = \dim C^\perp=10 $, which confirms that $C = C^{\perp_{\rm rev}}$ in agreement with Theorem~\ref{thm:affine-additive-construction}.
\end{example}

We next give multiplicative constructions for general multiplicity.
Let $q=p^m$, and let $H\subseteq\mathbb{F}_q^*$ be the multiplicative subgroup of
order $r_0$.  Assume that $r_0\equiv 1 \pmod p$. Let $Q=p^v$ with $v\mid m$, and assume that $Q-1\mid \frac{q-1}{r_0}.$ Equivalently, $\mathbb{F}_Q^*\subseteq (\mathbb{F}_q^*)^{r_0}:=\{x^{r_0}: x \in\mathbb{F}^*_q\}$. Let $W\subseteq\mathbb{F}_Q$ be an $\mathbb{F}_p$-linear subspace of dimension $u$, where
$0\le u<v$.  Choose $\xi \in\mathbb{F}_Q\backslash W $, and set
$  D=\xi+W.$
Then $D\subseteq \mathbb{F}_Q^*\subseteq (\mathbb{F}_q^*)^{r_0}$.
Define
$ \mathcal{A}=\{\alpha\in\mathbb{F}_q^*: \alpha^{r_0}\in D\}=\bigsqcup_{d\in D}\{\alpha\in\mathbb{F}_q^*: \alpha^{r_0}=d\}.$
For each $d\in D$, choose $\alpha_d\in\mathbb F_q^*$ such that $\alpha_d^{r_0}=d$. Then $\{\alpha\in\mathbb F_q^*: \alpha^{r_0}=d\}=\alpha_dH$ and
\[
        \mathcal A=\bigsqcup_{d\in D}\alpha_dH.
\]
That is, $\mathcal A$ is a disjoint union of multiplicative cosets of $H$, and the total size is $r=|\mathcal A|= p^ur_0$.

\begin{theorem}[A pullback union of multiplicative cosets for general multiplicity]\label{mul--pullback-general}
Let $2\le s\le p$ and $\mathcal{A}$ be defined as above, and assume that $sr=sp^ur_0$ is even.  Let $\eta\in\mathbb F_q^*$, and choose nonzero multipliers
$v_{i,\alpha}\in\mathbb F_q^*$ satisfying
\[
        v_{i,\alpha}v_{s-i+1,\alpha}
        =
        \eta\alpha^{s-r_0s},
        \qquad 1\le i\le s,
        \quad \alpha\in\mathcal A.
\]
When $s$ is odd, this condition includes the requirement that the right-hand side is a square in $\mathbb F_q^*$ for every $\alpha\in\mathcal A$; when $s$ is even, such pairwise products can always be realized once the right-hand side is prescribed.
Then the HRS code $C=\operatorname{HRS}_{\frac{sr}{2}}(\mathcal A,V,s,r)$ is reverse self-dual.
\end{theorem}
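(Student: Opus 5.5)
The plan is to apply the moment criterion of Theorem~\ref{thm:general_criterion}. The prescribed products give the symmetric multiplier condition with $\lambda_\alpha=\eta\alpha^{s-r_0s}$. So it suffices to prove
\[
\binom{m+s-1}{s-1}\sum_{\alpha\in\mathcal A}\alpha^{m+s-r_0s}=0,\qquad 0\le m\le rs-s-1 .
\]
The factor $\eta$ is harmless, and negative exponents make sense because $\mathcal A\subseteq\mathbb F_q^*$.

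\textbf{Step 1: reduce to a sum over $D$.} Put $n=m+s-r_0s$ and use $\mathcal A=\bigsqcup_{d\in D}\alpha_dH$. Then
\[
\sum_{\alpha\in\mathcal A}\alpha^n=\sum_{d\in D}\alpha_d^n\sum_{h\in H}h^n .
\]
The inner character sum over the cyclic group $H$ of order $r_0$ is $0$ unless $r_0\mid n$, which is equivalent to $r_0\mid m+s$. In that case it equals $r_0$, and $r_0=1$ in $\mathbb F_q$ because $r_0\equiv1\pmod p$.

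Write $m+s=r_0j$. Then $n=r_0(j-s)$ and $\alpha_d^{n}=d^{\,j-s}$, so
\[
\sum_{\alpha\in\mathcal A}\alpha^n=\sum_{d\in D}d^{\,j-s}.
\]
The range $0\le m\le rs-s-1$ with $r=p^ur_0$ becomes $s\le r_0j\le p^ur_0s-1$. Hence $1\le j\le p^us-1$.

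\textbf{Step 2: transfer the binomial coefficient.} Since $2\le s\le p$, Lucas' theorem shows that $\binom{N}{s-1}\bmod p$ depends only on the last base-$p$ digit of $N$. Here $m+s-1=r_0j-1\equiv j-1\pmod p$, again because $r_0\equiv1\pmod p$. Therefore
\[
\binom{m+s-1}{s-1}=\binom{j-1}{s-1}\quad\text{in }\mathbb F_q .
\]
If $j<s$, both sides vanish: the last digit of $j-1$ is $j-1<s-1$. So we may assume $j\ge s$, and then the exponent $j-s$ is nonnegative.

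\textbf{Step 3: invoke the affine moment identity on $D$.} The set $D=\xi+W$ is an affine $\mathbb F_p$-subspace of $\mathbb F_q$ of size $p^u$, and $2\le s\le p$. Lemma~\ref{lem:affine-moment}, applied to $D$ in place of $\mathcal A$ with $m'=j-s$, gives
\[
\binom{j-1}{s-1}\sum_{d\in D}d^{\,j-s}=0\qquad\text{whenever }0\le j-s\le p^us-s-1 .
\]
This range holds exactly because $j\le p^us-1$. Combining Steps 1--3 yields the criterion, and Theorem~\ref{thm:general_criterion} gives reverse self-duality.

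\textbf{Main obstacle.} The delicate point is the bookkeeping in Steps 1--2: the exponent range for $D$ must land inside the range of Lemma~\ref{lem:affine-moment}, and the binomial coefficient must transfer correctly from $r_0j-1$ to $j-1$. Both hinge on $r_0\equiv1\pmod p$ and $s\le p$. The remaining ingredients are routine: the hypothesis $\mathbb F_Q^*\subseteq(\mathbb F_q^*)^{r_0}$ guarantees the existence of the roots $\alpha_d$ and the disjointness of the cosets $\alpha_dH$; and the realizability of the multipliers, including the square condition when $s$ is odd, is assumed in the statement.
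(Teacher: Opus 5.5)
Your proposal is correct and follows essentially the same route as the paper: you reduce to the moment criterion of Theorem~\ref{thm:general_criterion}, collapse the sum over the cosets $\alpha_dH$ by the character sum over $H$, transfer $\binom{r_0j-1}{s-1}$ to $\binom{j-1}{s-1}$ by Lucas' theorem (using $r_0\equiv 1 \pmod p$ and $s\le p$), check the range $0\le j-s\le p^us-s-1$, and finish with Lemma~\ref{lem:affine-moment} applied to $D$. The only cosmetic difference is that you replace the factor $r_0$ by $1$ in $\mathbb{F}_q$, whereas the paper carries it along as a nonzero constant.
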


\begin{proof}
 Set $\lambda_\alpha:=v_{i,\alpha}v_{s-i+1,\alpha}
        =\eta\alpha^{s-r_0s}$.
 Therefore, by the reverse self-duality criterion in Theorem~\ref{thm:general_criterion}, it is enough to prove
\[
        \binom{m+s-1}{s-1}
        \sum_{\alpha\in\mathcal A}\lambda_\alpha\alpha^m=0,
        \qquad 0\le m\le sr-s-1.
\]
Since $ \mathcal A=\bigsqcup_{d\in D}\alpha_dH$, we obtain
\begin{align}
    \sum_{\alpha\in\mathcal A}\lambda_\alpha\alpha^m
        &=\eta\sum_{\alpha\in\mathcal A}\alpha^{m+s-r_0s}\nonumber\\
        &=\eta\sum_{d\in D}\sum_{\alpha \in \alpha_dH}\alpha^{m+s-r_0s}\nonumber\\
        &=\eta\sum_{d\in D}
          \sum_{\zeta \in H}(\alpha_d \zeta)^{m+s-r_0s}\nonumber\\
          &=\eta\sum_{d\in D}\frac{\alpha_d^{m+s}}{d^s}
          \sum_{\zeta \in H} \zeta^{m+s} \label{eq7}.
\end{align}
The last equality follows from $\alpha_d^{r_0}=d$ and $\zeta^{r_0}=1$. Since $H$ is cyclic of order $r_0$,
\[
        \sum_{\zeta\in H}\zeta^{m+s}
        =
        \begin{cases}
        0, & r_0\nmid m+s,\\
        r_0, & r_0\mid m+s.
        \end{cases}
\]
 Hence, if $r_0\nmid m+s$, the sum in \eqref{eq7} is zero and the desired moment identity follows. Assume that
$ m+s=hr_0$ for some integer $h\ge1$. Then, by \eqref{eq7}, we obtain
\begin{equation}\label{eq8}
    \sum_{\alpha\in\mathcal A}\lambda_\alpha\alpha^m=\eta r_0\sum_{d\in D}\frac{\alpha_d^{r_0h}}{d^s}=\eta r_0\sum_{d\in D}d^{h-s}.
\end{equation}
If $h<s$, then
 $0\le h-1<s-1<p$. Moreover, $r_0\equiv1\pmod p$, so the least base-$p$ digit of $hr_0-1$ is $h-1$. By Lucas' theorem,
\[
        \binom{m+s-1}{s-1}=\binom{hr_0-1}{s-1}
        \equiv
        \binom{h-1}{s-1}=0
        \pmod p,
\]
 and the required identity holds.

If $h\ge s$, put $ j:=h-s\ge0$. Because $s-1<p$, by Lucas' theorem and $r_0\equiv1\pmod p$, we have
\begin{equation*}\label{eq:lucas-pullback-simplified}
        \binom{m+s-1}{s-1}
        =\binom{hr_0-1}{s-1}
        \equiv \binom{h-1}{s-1}
        =\binom{j+s-1}{s-1}
        \pmod p.
\end{equation*}
Therefore, by \eqref{eq8}, we have
\begin{equation}\label{eq9}
     \binom{m+s-1}{s-1}
        \sum_{\alpha\in\mathcal A}\lambda_\alpha\alpha^m=\eta r_0 \binom{j+s-1}{s-1}\sum_{d\in D}d^{j}
\end{equation}
We next check that $j$ lies in the range where Lemma~\ref{lem:affine-moment} applies. Since
$  m\le sr-s-1=sp^ur_0-s-1,$
we have
\[
        hr_0=m+s\le sp^ur_0-1<sp^ur_0.
\]
Thus $h<sp^u$, and hence $h\le sp^u-1$. Consequently,
\[
        0\le j=h-s\le sp^u-s-1.
\]
Now apply Lemma~\ref{lem:affine-moment} to the affine additive subspace $D=\xi+W$, which has size $p^u$. Since $2\le s\le p$ and $0\le j\le sp^u-s-1$, the lemma gives
\begin{equation*}\label{eq:affine-applied-simplified}
        \binom{j+s-1}{s-1}\sum_{d\in D}d^j=0.
\end{equation*}
Then \eqref{eq9} implies that the required moment identities hold for every $0\le m\le sr-s-1$. Hence, by Theorem~\ref{thm:general_criterion}, $C$ is reverse self-dual.
\end{proof}

\begin{example}
Consider an HRS  code over $\mathbb F_9$ with multiplicity $s = 3$, dimension $k = 6$, and $r = 4$ evaluation points. Let $g$ be a primitive element of $\mathbb F_9$ and $H = \langle g^2 \rangle = \{1, g^2, 2, g^6\}$ be the multiplicative subgroup of $\mathbb{F}_9^*$ of order $r_0 = 4$. Set $Q = 3^v$ with $v=1$, meaning the subfield is chosen as the prime field $\mathbb{F}_3$. This choice satisfies the divisibility condition $Q - 1 \mid \frac{q-1}{r_0}$,
equivalently, $\mathbb{F}_3^* \subseteq (\mathbb{F}_9^*)^4$. 	Let $W=\{0\}\subseteq\mathbb F_3$, and choose
	$\xi=1\in\mathbb F_3\backslash W$. Then $D=\xi+W=\{1\}\subseteq \mathbb{F}_3^*.$  We define the evaluation set via the multiplicative pullback $\mathcal{A} = \{ \alpha \in \mathbb{F}_9^* \mid \alpha^4 \in D \} = H=\{1,g^2,2,g^6\}$. 
Choosing the scaling constant $\eta=1$, Theorem~\ref{mul--pullback-general} prescribes that the multiplier matrix $V\in  (\mathbb{F}^*_9)^{3\times 4}$ must satisfy $v_{i, \alpha} \cdot v_{4-i, \alpha} = \eta \alpha^3 / \alpha^{12} = \alpha^3$. 
We construct the multiplier matrix $V$ as follows:
$$V = \begin{bmatrix}
1 & g^2 & 2 & g^6 \\
1 & g^3 & g^2 & g \\
1 & 2 & 1 & 2
\end{bmatrix}$$
Then the HRS code $C = \operatorname{HRS}_6(\mathcal{A}, V, 3, 4)$ is generated by 
$$M(1) = \begin{bmatrix}
1 & g^2 & 2 & g^6 \\
0 & 0 & 0 & 0 \\
0 & 0 & 0 & 0
\end{bmatrix}, \quad
M(x) = \begin{bmatrix}
1 & 2 & 1 & 2 \\
1 & g^3 & g^2 & g \\
0 & 0 & 0 & 0
\end{bmatrix}, \quad
M(x^2) = \begin{bmatrix}
1 & g^6 & 2 & g^2 \\
2 & 2g^5 & g^2 & 2g^7 \\
1 & 2 & 1 & 2
\end{bmatrix}.$$

$$M(x^3) = \begin{bmatrix}
1 & 1 & 1 & 1 \\
0 & 0 & 0 & 0 \\
0 & 0 & 0 & 0
\end{bmatrix},\quad
M(x^4) = \begin{bmatrix}
1 & g^2 & 2 & g^6 \\
1 & g & 2g^2 & g^3 \\
0 & 0 & 0 & 0
\end{bmatrix}, \quad
M(x^5) = \begin{bmatrix}
1 & 2 & 1 & 2 \\
2 & 2g^3 & 2g^2 & 2g \\
1 & 2g^6 & 2 & 2g^2
\end{bmatrix}.$$
Denote $G_m = M(x^m)$ for $0 \le m \le 5$, then $C = \operatorname{span}_{\mathbb{F}_9}\{G_0, \dots, G_5\}$ and
$\operatorname{Rev}(C) = \operatorname{span}_{\mathbb{F}_9}\{\operatorname{Rev}(G_0), \dots, \operatorname{Rev}(G_5)\}$. 
 A direct computation under the Euclidean inner product yields
\[
\langle G_a, \operatorname{Rev}(G_b) \rangle = 0 \quad \text{for all } 0 \le a, b \le 5.
\]
Thus $\operatorname{Rev}(C) \subseteq C^\perp$. Then $C^\perp = \operatorname{Rev}(C)$ since $\dim C = \dim \operatorname{Rev}(C) = \dim C^\perp= 6 $.
Consequently, we have $C = C^{\perp_{\rm rev}}$, which explicitly and directly verifies the reverse self-duality asserted by Theorem~\ref{mul--pullback-general}.
\end{example}

\section{Conclusion}\label{secV}

In this paper, we studied the duality structure of Hyperderivative
Reed--Solomon codes under the NRT metric. Using Hasse derivatives and a
residue-theoretic argument, we derived an explicit component-wise
description of the Euclidean dual of an HRS code. The resulting formula
shows that the dual has a reverse-order hyperderivative evaluation
structure, together with an invertible blockwise upper-triangular
transformation determined by local expansions at the evaluation points.
In particular, for full-domain evaluation in the low-multiplicity case,
this transformation reduces to diagonal scaling, so the row-reversed dual
is again an HRS code.

Based on this duality viewpoint, we introduced and studied reverse
self-dual HRS codes. For multiplicity two, we obtained a moment criterion
for reverse self-duality and constructed several explicit families from
additive and multiplicative cosets. We further extended the criterion to
general multiplicity under a symmetric multiplier condition, and presented
corresponding constructions using additive and multiplicative structures
of finite fields. These results provide a clearer algebraic understanding
of HRS codes and may serve as a foundation for further developments such
as parity-check constructions, duality-based decoding, and Schur-product
analysis in NRT-metric spaces.

\end{document}